\documentclass[10pt]{article}
\usepackage{amsmath}
\usepackage{amssymb, amscd, amsthm}
\usepackage[all]{xy}
\usepackage[dvips]{graphicx}
\usepackage{verbatim}
\usepackage[perpage,symbol*]{footmisc}
\newtheorem{theorem}{Theorem}

\newtheorem{lemma}{\textbf{Lemma}}
\newtheorem{remark}{\textbf{Remark}}

\newtheorem{example}{\textbf{Example}}

\usepackage[justification=centering]{caption}
\usepackage{longtable}

\usepackage{multirow}
\usepackage{float}

\newcommand{\Tr}{{{\rm Tr}}}
\begin{document}

\baselineskip 17pt
\title{\Large\bf Sequences with thirteen-valued cross correlations}

\author{Yuehui Cui\qquad\qquad Jinquan Luo}
\footnotetext{Yuehui Cui and Jinquan Luo are with School of Mathematics
and Statistics \& Hubei Key Laboratory of Mathematical Sciences, Central China Normal University, Wuhan China 430079.\\
 E-mail: hfcyh1@163.com(Y.Cui), luojinquan@ccnu.edu.cn(J.Luo)}

\date{}
\maketitle
\begin{abstract}
In this paper, we completely determine the cross correlation distribution between an $m$-sequence $(s_t)$ of period $p^n-1$ and its $d$-decimated sequence $(s_{dt})$, where $d = \frac{p^n-1}{3} + p^i$,  $p \equiv 1 \pmod{3}$, $\frac{1}{3}p^{-i}(p^n-1) \not\equiv 2 \pmod{3}$, and $0 \leq i < n$.  It is shown that the cross correlation is $13$-valued. To the best of our knowledge, this is the first time that the cross correlation distribution of so many values has been determined. 
\end{abstract}
{\bf Key words}:  Cross correlation,  Exponential sum,  Gaussian period, Cyclic code.

\section{Introduction}

Sequences that exhibit good cross correlation properties are highly valuable in applications, such as radar systems, code division multiple access (CDMA) communications systems, cryptography and error correcting codes. For foundational knowledge and applications of sequences, we refer to \cite{H2, niho, cdma, ROSEPHD}. Let $s=(s_t)_{t=0}^{l-1}$ and $s'=(s'_t)_{t=0}^{l-1}$ be two sequences of the same length $l$. The cross correlation function between $s$ and $s'$ by a shift $\tau$, $0\leq\tau<l$ is defined by
\begin{equation}\label{dfC_d}
C_{\tau}(s,s')=\sum\limits_{t=0}^{l-1}\zeta_p^{s_{t+\tau}-s'_t},
\end{equation}
where $\zeta_p=e^{\frac{2\pi\sqrt{-1}}{p}}$ is a primitive complex $p$-th root of unity. In many cases the sequence $s'$ is a decimated sequence of $s$ with some decimation $d$, that is, $s'_t=s_{(dt)}$ for $0\leq t\leq l-1$ where $(dt)$ takes on the smallest non-negative integer equal to $dt$ module $l$. If \( d \) and \( l \) are coprime, then \( s' \) has the same length as \( s \), and both are called sequences of length \( l \). 

Over the past 50 years, research on cross correlation of sequences has garnered significant interest from different aspects (see \cite{H2, L2, xia1, xia2, xiong}). However, determining the value distribution of the cross correlation function remains theoretically challenging. The reader can refer to two recent survey papers \cite{survey1} and \cite{survey2} for those known decimations.

In \cite{H2}, Helleseth determined the value distribution of the cross correlation function for the Niho exponent \(\frac{p^n-1}{3} + p^i\) under the conditions \(n \equiv 0 \pmod{2}\), \(p \equiv 2 \pmod{3}\), \(\frac{1}{3}p^{-i}(p^n-1) \not\equiv 2 \pmod{3}\), and \(0 \le i < n\). A natural question arises: what happens when \(p \equiv 1 \pmod{3}\) or $n$ is odd? In this paper, we address this case and present the corresponding results. We compute the value distribution of the cross correlation function for the same exponent \(d\) but with \(p \equiv 1 \pmod{3}\) and  \(n\) is an arbitrary positive integer. Our results show that the cross correlation function takes exactly \(13\) distinct values.

Throughout this paper, we always assume that
$
d=\frac{p^{n}-1}{3}+p^i,
$
where $p\equiv 1\,\,({\rm mod}\,\,3)$, $\frac{1}{3}p^{-i}(p^n-1)\not\equiv 2\,\,({\rm mod}\,\,3)$ and  $0\leq i<n$. We can try to study cross correlation of the sequence
\begin{equation*}
  s=(s_t)_{t=0}^{p^{n}-2}=\big(\Tr_1^n(\psi^t)\big)_{t=0}^{p^{n}-2}
\end{equation*} 
and its decimated sequence 
\begin{equation*}
  s'=(s'_{t})_{t=0}^{p^{n}-2}=(s_{dt})_{t=0}^{p^{n}-2}
  =\big(\Tr_1^n(\psi^{dt})\big)_{t=0}^{p^{n}-2}.
\end{equation*}
Then by (\ref{dfC_d}), the cross correlation function with shift $\tau$ is
\begin{equation*}
    C_{\tau}(s,s')=\sum\limits_{t=0}^{p^n-2}\zeta_p^{\Tr ^n_1\left(\psi^{t+\tau}-\psi^{dt}\right)}
    =\sum\limits_{x\in\mathbb{F}_{p^n}^*}\zeta_p^{{\rm{Tr}}_1^n(ux-x^{d})}
\end{equation*}
where $u=\psi^{\tau}$.  The main result of this paper is presented as follows.

\begin{theorem}\label{13distribution} \rm
Let $d=\frac{p^{n}-1}{3}+p^i$,
where $p\equiv 1\,\,({\rm mod}\,\,3)$, $\frac{1}{3}p^{-i}({p^{n}-1})
\not\equiv 2\,\,({\rm mod}\,\,3)$ and $0\leq i<n$. Then $\gcd(d,p^n-1)=1$. The cross correlation distribution $C_{\tau}(s,s')$ when $\tau$ runs from $0$ to $p^n-2$ can be described as follows. 

(i) For case $\frac{1}{3}p^{-i}({p^{n}-1})\equiv 1\,\,({\rm mod}\,\,3)$, Table \ref{13 table1} holds.

(ii) For case $\frac{1}{3}p^{-i}({p^{n}-1})\equiv 0\,\,({\rm mod}\,\,3)$, Table \ref{13 table2} holds.

\begin{table}[H]\footnotesize
\centering
 \caption{ $\frac{1}{3}p^{-i}({p^{n}-1})\equiv 1\,\,({\rm mod}\,\,3)$ }
\label{13 table1}
\begin{tabular}{l l }
\hline
Value & Frequency \\
[0.5ex]
\hline
-1 & $\frac{2}{9}\big(p^n+\frac{{(-1)}^{n-1}}{2^{n-1}}E(u,v,n)-8\big)$ \\
$(-1)^{n+1}2p^{n/2} \cos\left( \frac{n\theta}{3} - \frac{2\pi l}{3} \right)-1$
& $\frac{1}{27}\big(p^n+\frac{{(-1)}^{n-1}}{2^{n-1}}E(u,v,n)-8\big)$ for any $l=0,1,2$  \\ 
$\frac{2\sqrt{3}(-1)^{n+1}p^{n/2}}{3} \cos\left( \frac{n\theta}{3} -\frac{\pi(4l+1)}{6} \right) - 1$
 & $\frac{1}{9}\big(p^n+\frac{{(-1)}^{n}}{2^{n}}\big(E(u,v,n)-O(u,v,n)\big)-2\big)$ for any $l=0,1,2$  \\ 
$\frac{2\sqrt{3}(-1)^{n+1}p^{n/2}}{3} \cos\left( \frac{n\theta}{3} -\frac{\pi(4l-1)}{6} \right) - 1$ 
&   $\frac{1}{9}\big(p^n+\frac{{(-1)}^{n}}{2^{n}}\big(E(u,v,n)+O(u,v,n)\big)-2\big)$ for any $l=0,1,2$  \\
 $(-1)^{n+1}\frac{4p^{n/2}}{3} \cos\left( \frac{n\theta}{3} - \frac{2\pi l}{3} \right)+\frac{p^n-3}{3}$ &  1 for any $l=0,1,2$\\
\hline
\end{tabular}
\end{table}
\begin{table}[H]\footnotesize
\centering
 \caption{ $\frac{1}{3}p^{-i}({p^{n}-1})\equiv 0\,\,({\rm mod}\,\,3)$ }
\label{13 table2}
\begin{tabular}{l l }
\hline
Value & Frequency \\
[0.5ex]
\hline
-1 & $\frac{1}{9}\big(2p^n+2\frac{{(-1)}^{n-1}}{2^{n-1}}E(u,v,n)-25\big)$ \\
$(-1)^{n+1}2p^{n/2} \cos\left( \frac{n\theta}{3} - \frac{2\pi l}{3} \right)-1$
& $\frac{1}{27}\big(p^n+\frac{{(-1)}^{n-1}}{2^{n-1}}E(u,v,n)+1\big)$ for any $l=0,1,2$  \\ 
$\frac{2\sqrt{3}(-1)^{n+1}p^{n/2}}{3} \cos\left( \frac{n\theta}{3} -\frac{\pi(4l+1)}{6} \right) - 1$
 & $\frac{1}{9}\big(p^n+\frac{{(-1)}^{n}}{2^{n}}\big(E(u,v,n)-O(u,v,n)\big)-2\big)$ for any $l=0,1,2$  \\ 
$\frac{2\sqrt{3}(-1)^{n+1}p^{n/2}}{3} \cos\left( \frac{n\theta}{3} -\frac{\pi(4l-1)}{6} \right) - 1$ 
 &   $\frac{1}{9}\big(p^n+\frac{{(-1)}^{n}}{2^{n}}\big(E(u,v,n)+O(u,v,n)\big)-2\big)$ for any $l=0,1,2$  \\
$\frac{2(-1)^{n+1}p^{n/2}}{3} \cos\left( \frac{n\theta}{3} - \frac{\pi(2l+1)}{3} \right) +\frac{p^n-3}{3}$ &  1 for any $l=0,1,2$\\
\hline
\end{tabular}
\end{table}
Here 
\begin{equation*}
\theta = \operatorname{sgn}(v) \cdot \arccos\frac{u}{2\sqrt{p}},
\end{equation*}
\begin{equation*}
  E(u,v,n):=u^n-\sum_{t=1,v_2(t)=1}^n\binom{n}{t}u^{n-t}v^t3^{\frac{t}{2}}
  +\sum_{t=1,v_2(t)\geq2}^n\binom{n}{t}u^{n-t}v^t3^{\frac{t}{2}},
\end{equation*}
\begin{equation*}
  O(u,v,n):=\sum_{t=1,v_2(t+1)=1}^n\binom{n}{t}u^{n-t}v^t3^{\frac{t+1}{2}}
  -\sum_{t=1,v_2(t+1)\geq2}^n\binom{n}{t}u^{n-t}v^t3^{\frac{t+1}{2}},
\end{equation*}
 and the integers $u$ and $v$ are uniquely determined such that 
\begin{equation*}
  u^2+3v^2=4p, u\equiv 1\,\,({\rm mod}\,\,3), v\equiv 0\,\,({\rm mod}\,\,3) \mbox{ and } 3v\equiv u(2\psi^{(p^n-1)/3}+1)\,\,({\rm mod}\,\,p). 
\end{equation*}
Here \(\psi\) is a primitive element of \(\mathbb{F}_{p^n}\) and \(v_2(m)\) is the largest integer \(t\) such that \(2^t \mid m\).
\end{theorem}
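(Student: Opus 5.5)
We first check that $\gcd(d,p^n-1)=1$. Write $N=p^n-1$ and $m=N/3$, so that $d=m+p^i$. Any prime $r$ dividing both $d$ and $N$ satisfies $r\neq p$, since $p\nmid N$. If $r\neq3$, then $r\mid m$, hence $r\mid p^i$, which is impossible. If $r=3$, multiplying $d$ by $p^{-i}$ gives $dp^{-i}\equiv mp^{-i}+1\pmod 3$. Since $p\equiv1\pmod 3$ we have $p^{-i}\equiv 1$, and $3\mid N$. So $3\mid d$ forces $\frac13 p^{-i}N\equiv 2\pmod 3$, which the hypothesis excludes. Hence $\gcd(d,N)=1$.

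For the correlation, let $\omega=\psi^{m}$, a primitive cube root of unity in $\mathbb F_p$. Partition $\mathbb F_{p^n}^*=C_0\cup C_1\cup C_2$ into cubic cyclotomic classes $C_j=\psi^j\langle\psi^3\rangle$. For $x\in C_j$ we have $x^d=\omega^j x^{p^i}$. Therefore
\[
C_\tau(s,s')=\sum_{j=0}^{2}\sum_{x\in C_j}\zeta_p^{\Tr_1^n(ux-\omega^j x^{p^i})}
=\sum_{j=0}^2 \sum_{x\in C_j}\zeta_p^{\Tr_1^n((u-\omega^{j p^{-i}}... )x)}.
\]
More precisely, since $\omega\in\mathbb F_p$, we get $\Tr(\omega^j x^{p^i})=\Tr(\omega^j x)$, so the exponent becomes $\Tr((u-\omega^j)x)$. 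Expressing the indicator of $C_j$ through cubic characters $\chi$, we obtain
\[
C_\tau=\frac13\sum_{j}\sum_{k=0}^2 \bar\chi^k(\psi^j)\sum_{x\ne0}\chi^k(x)\zeta_p^{\Tr((u-\omega^j)x)}.
\]
This is a combination of Gauss sums $G(\chi^k)$ and of the values $\chi^k(u-\omega^j)$, with $u-\omega^j=0$ possible for exactly one $j$ when $u$ is itself a cube root of unity. The hypothesis on $\frac13p^{-i}N\bmod 3$ enters through the class of $\psi^{j}$ relative to the twist, and this is what produces the two tables.

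Next we need the cubic Gauss sums over $\mathbb F_{p^n}$ with $p\equiv1\pmod 3$. By Davenport–Hasse, $G_n(\chi)=(-1)^{n-1}G_1(\chi_1)^n$. Over $\mathbb F_p$ the classical evaluation gives $G_1^3=p\pi$ with $4p=u^2+3v^2$, where the normalization of $u$ and $v$ is fixed by the congruence condition involving $\psi^{(p^n-1)/3}$. Equivalently, $G_1=\sqrt p\,e^{i\theta/3}\cdot(\text{cube root of unity})$, with $\cos\theta=u/(2\sqrt p)$. This yields $G_n=(-1)^{n-1}p^{n/2}e^{i(n\theta/3+2\pi l/3)}$. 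Substituting, for $u$ in a generic class the value of $C_\tau$ is $-1$ plus one of $2p^{n/2}\cos(\cdot)$ or $\frac{2}{\sqrt3}p^{n/2}\cos(\cdot)$, depending on the triple $(\chi(u-1),\chi(u-\omega),\chi(u-\omega^2))$. The three exceptional $u\in\{1,\omega,\omega^2\}$ give the last row, with the extra term $\frac{p^n-3}{3}$ coming from the vanishing argument.

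The main obstacle is the frequencies. We must count, for each pattern of cubic characters $(\chi(u-1),\chi(u-\omega),\chi(u-\omega^2))\in\mu_3^3$, the number of $u\in\mathbb F_{p^n}$. Since $\prod_j(u-\omega^j)=u^3-1$, these counts are sums of character sums of the form $\sum_u\chi^{a}(u-1)\chi^b(u-\omega)\chi^c(u-\omega^2)$. Via the substitution $u\mapsto$ cubic twists, they reduce to Jacobi sums $J(\chi,\chi)$ and $J(\chi,\chi^2)$ over $\mathbb F_{p^n}$, which lift to the $n$-th powers of $J_1=\frac{u+v\sqrt{-3}}{2}$-type elements. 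Expanding $(u\pm v\sqrt{-3})^n$ binomially and separating the real and imaginary parts according to $t\bmod 4$ yields exactly $E(u,v,n)$ and $O(u,v,n)$; the $2$-adic valuation conditions encode $i^t$. We would then verify that the totals sum to $p^n-1$ and check the first two power moments of $C_\tau$ as consistency tests.
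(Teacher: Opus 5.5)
Your set-up is correct and in places cleaner than the paper's. Since $\omega=\psi^{(p^n-1)/3}\in\mathbb{F}_p$, the $p^i$-twist disappears; the paper's $\beta$ is just $\omega$. Davenport--Hasse with $G_1(\chi)^3=pJ(\chi,\chi)$ gives $G_n(\chi)$ up to a cube root of unity more directly than the paper's Eisenstein-sum argument, and that ambiguity is harmless because within each row the frequency does not depend on $l$.

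The gap is in the frequency step. Write $z$ for the field element (your $u$), $\chi(\psi)=\Omega=e^{2\pi\sqrt{-1}/3}$, and $f_j(z)\equiv\operatorname{ind}_\psi\bigl(\psi^j(z-\omega^j)\bigr)\pmod 3$, so that $C_\tau=\sum_j\eta_{\psi^{f_j(z)}}$. Set $T(a)=\sum_{z\notin\{1,\omega,\omega^2\}}\prod_{j=0}^{2}\chi^{a_j}\bigl(\psi^j(z-\omega^j)\bigr)$ for $a\in\{0,1,2\}^3$. The number of $z$ with pattern $(k,k,k)$ is $\frac{1}{27}\sum_a\bar\Omega^{k(a_0+a_1+a_2)}T(a)$. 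This contains, with nonzero coefficients, the sums with $a_0,a_1,a_2$ all nonzero but not all equal, such as $\sum_z\chi(z-1)\chi(z-\omega)\bar\chi(z-\omega^2)$. That is a character sum on the genus-$2$ curve $y^3=(x-1)(x-\omega)(x-\omega^2)^2$ and is not determined by $J(\chi,\chi)$. For instance, for $p=7$, $n=1$, $\psi=3$ it equals $-2$, which forces the reciprocal roots of its $L$-function to be $1\pm\sqrt{-6}$. The substitution $w=z^3$ only handles $a_0=a_1=a_2$. So your claim that everything reduces to Jacobi sums fails term by term.

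The missing idea is a symmetry that kills these terms. From $\psi^j(\omega z-\omega^j)=\psi\omega\cdot\psi^{j-1}(z-\omega^{j-1})$ you get $f_j(\omega z)\equiv f_{j-1}(z)+1+\frac{p^n-1}{3}\pmod 3$. The condition $1+\frac{p^n-1}{3}\not\equiv 0$ is exactly the hypothesis. Hence $z\mapsto\omega z$ shows that the counts of the multisets $\{k,k,k\}$, $\{k,k,k+1\}$, $\{k,k,k+2\}$ do not depend on $k$, and averaging over $k$ removes every $T(a)$ with $a_0+a_1+a_2\not\equiv 0\pmod 3$. What survives is:
\begin{itemize}
\item $T(0,0,0)=p^n-3$;
\item $T(1,1,1)=J-1$ and $T(2,2,2)=\bar J-1$, via $w=z^3$, where $J=(-1)^{n-1}\bigl(\tfrac{u+v\sqrt{-3}}{2}\bigr)^n$;
\item the six $a$ permuting $(0,1,2)$: if $a_j=1$, $a_k=2$, $a_l=0$, then $T(a)=\Omega^{j-k}\bigl(-1-\chi\bigl(\tfrac{\omega^l-\omega^j}{\omega^l-\omega^k}\bigr)\bigr)$, with the ratio equal to $-\omega$ or $-\omega^2$.
\end{itemize}

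This excluded-point term is where the two tables actually separate, which your sketch leaves vague. The six terms total $-3$ if $\chi(\omega)\neq 1$ and $6$ if $\chi(\omega)=1$. With $A=p^n+\frac{(-1)^{n-1}}{2^{n-1}}E(u,v,n)$, this gives $\frac{1}{27}(A-8)$ versus $\frac{1}{27}(A+1)$ for the $3\eta_{\psi^k}$ rows. The class $\{0,1,2\}$ similarly gives $\frac19(2A-7)$ versus $\frac19(2A-16)$. You must still remove $z=0$, which lies in that class, to get the $-1$ row.

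The paper avoids the genus-$2$ sums differently. It lets $v$ vary too, uses $W_d(u,v)=W_d(uv^{-1/d},1)$, and the linear relation $u_0+\psi^{-1}\beta u_1+\psi^{-2}\beta^2u_2=0$ reduces the two-dimensional count to cyclotomic numbers of order $3$. Homogeneity there plays the role of your averaging over $k$.
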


\section{Preliminaries}\label{Preliminaries}

In this section, we briefly recall some definitions and results which will be used later in this paper. We begin this section by fixing some notations throughout this paper unless otherwise stated.

\begin{tabular}{ll}
$\mathbb{F}_{p^n}$ & finite field of $p^n$ elements \\
$\mathbb{F}_{p^n}^*$ & multiplicative group of nonzero elements of $\mathbb{F}_{p^n}$ \\
$\psi$ & a primitive element of $\mathbb{F}_{p^n}$ \\
$\Tr^n_1$ & trace function from $\mathbb{F}_{p^n}$ to $\mathbb{F}_p$ \\
$\zeta_p$ &  a primitive complex $p$-th root of unity $e^{2\pi \sqrt{-1}/p}$ \\
$\operatorname{ind}_{\psi}(x)$ & $\operatorname{ind}_{\psi}(x)=t$ for $0 \le t < p^n - 1$, $x\in\mathbb{F}_{p^n}^*$, $x = \psi^t$ \\
$\langle\psi^3\rangle$ & subgroup of $\mathbb{F}_{p^n}^*$ generated by $\psi^3$  \\
$\mathcal{D}_j$  &  $\mathcal{D}_j=\psi^j\langle\psi^3\rangle$\\
$v_p(m)$ &  largest integer $t$ such that $p^t \mid m$, i.e., $v_p(m) = t$\\
$C_{i,j}$ & $\{x\in\mathbb{F}_{p^{n}}\setminus \{0, -1\}: \operatorname{ind}_{\psi}(x+1)\equiv i\pmod{3},\ \operatorname{ind}_{\psi}(x)\equiv j\pmod{3}\}$ \\
$\eta_u$ & Gaussian period $\eta_u=\frac{1}{3}\sum\limits_{y\in \mathbb{F}_{p^n}^*}\zeta_p^{\Tr_1^n(uy^3)}$ for $u\in\mathbb{F}_{p^n}$ \\
$W_d(u,v)$ & $W_d(u,v)=\sum\limits_{x\in\mathbb{F}_{p^n}}\zeta_p^{{\rm{Tr}}_1^n(ux-vx^d)}$ for $(u,v)\in \mathbb{F}_{p^n}\times\mathbb{F}_{p^n}$\\
\end{tabular}

\begin{lemma}\label{p 1 3 squ}\rm
\cite[Theorem 1.2]{FRQ}
Let $p\equiv 1\,\,({\rm mod}\,\,3)$ be a prime. Let $n$ be a positive integer. Let $\psi$ be a primitive element of $\mathbb{F}_{p^n}$. For $a_1$, $a_2$, $c$ $\in\mathbb{F}_{p^n}^*$, denote by $N_1$ the number of solutions $(x_1,x_2)$ in $\mathbb{F}_{p^n}$ of the equation
\begin{equation*}
  a_1x_1^3+a_2x_2^3=c.
\end{equation*}
Then
\begin{equation*}
    N_1=\left\{ \begin{array}{llll}
	p^n+\frac{{(-1)}^{n-1}}{2^{n-1}}E(u,v,n)+\delta(a_1,a_2),&\mbox{if}& ind_{\psi}(a_1a_2c) \equiv 0\,\,({\rm mod}\,\,3),\\
    p^n+\frac{{(-1)}^{n}}{2^{n}}\big(E(u,v,n)-O(u,v,n)\big)+\delta(a_1,a_2),&\mbox{if}&
    ind_{\psi}(a_1a_2c) \equiv 1\,\,({\rm mod}\,\,3),\\ p^n+\frac{{(-1)}^{n}}{2^{n}}\big(E(u,v,n)+O(u,v,n)\big)+\delta(a_1,a_2),&\mbox{if}&
    ind_{\psi}(a_1a_2c) \equiv 2\,\,({\rm mod}\,\,3),
	\end{array}\right.
\end{equation*}
where 
\begin{equation*}
    \delta(a_1,a_2):=\left\{ \begin{array}{llll}
	-2,&\mbox{if}& ind_{\psi}(a_1a_2^2) \equiv 0\,\,({\rm mod}\,\,3),\\
    1,&\mbox{if}&
    ind_{\psi}(a_1a_2^2)\not \equiv 0\,\,({\rm mod}\,\,3),
	\end{array}\right.
\end{equation*}
 and $E(u,v,n)$, $O(u,v,n)$, $u$, $v$ are defined as in Theorem \ref{13distribution}.
\end{lemma}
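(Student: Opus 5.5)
We would prove the lemma by the standard character-sum count, reducing $N_1$ to a cubic Jacobi sum over $\mathbb{F}_{p^n}$. We then lift that sum from $\mathbb{F}_p$ with the Hasse--Davenport relation.

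\textbf{Step 1: reduce to a Jacobi sum.} Write $q=p^n$ and let $\chi$ be the cubic multiplicative character of $\mathbb{F}_q$ with $\chi(\psi)=\omega=e^{2\pi\sqrt{-1}/3}$, extended by $\chi(0)=0$. For every $y\in\mathbb{F}_q$, the number of $x$ with $x^3=y$ equals $1+\chi(y)+\chi^2(y)$. Substituting $y_k=a_kx_k^3$ gives
\begin{equation*}
N_1=\sum_{y_1+y_2=c}\prod_{k=1}^{2}\bigl(1+\chi(y_k/a_k)+\chi^2(y_k/a_k)\bigr)
= q+\sum_{j,k\in\{1,2\}}\bar\chi^{j}(a_1)\bar\chi^{k}(a_2)\sum_{y_1+y_2=c}\chi^j(y_1)\chi^k(y_2).
\end{equation*}
Here we used that the terms containing a single nontrivial character sum to $0$. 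Scaling by $c$ turns the inner sum into $\chi^{j+k}(c)J(\chi^j,\chi^k)$.

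\textbf{Step 2: the mixed terms.} For $(j,k)=(1,2)$ and $(2,1)$ we have $J(\chi,\bar\chi)=-\chi(-1)=-1$, because $-1=(-1)^3$ is a cube. These two terms therefore contribute $-\bigl(\chi(a_1a_2^2)+\overline{\chi(a_1a_2^2)}\bigr)$. This equals $-2$ when $\operatorname{ind}_\psi(a_1a_2^2)\equiv0\pmod 3$ and $+1$ otherwise, which is exactly $\delta(a_1,a_2)$.

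\textbf{Step 3: the diagonal terms.} For $(j,k)=(1,1)$ and $(2,2)$, use $\chi^2(c)=\bar\chi(c)$. The two terms then combine to $2\,\mathrm{Re}\bigl(\bar\chi(a_1a_2c)J_n\bigr)$, where $J_n=J(\chi,\chi)$ over $\mathbb{F}_q$. Hence
\begin{equation*}
N_1=q+\delta(a_1,a_2)+2\,\mathrm{Re}\bigl(\omega^{-m}J_n\bigr),\qquad m=\operatorname{ind}_\psi(a_1a_2c) \bmod 3.
\end{equation*}

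\textbf{Step 4: evaluate $J_n$.} Since $p\equiv1\pmod 3$, $\chi=\chi_1\circ N_{\mathbb{F}_q/\mathbb{F}_p}$ for the cubic character $\chi_1$ of $\mathbb{F}_p$ with $\chi_1(\psi^{(q-1)/(p-1)})=\omega$. Hasse--Davenport then gives $J_n=-(-J_1)^n$.

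The classical evaluation over $\mathbb{F}_p$ is $J_1=\frac{a+b\sqrt{-3}}{2}$ with $a^2+3b^2=4p$, $a\equiv-1\pmod 3$, $b\equiv0\pmod 3$; we write this as $-J_1=\frac{u+v\sqrt{-3}}{2}$ in the normalization of Theorem \ref{13distribution}. The sign of $v$ is fixed by reducing $J_1$ modulo a prime above $p$. Under the reduction, $\omega$ maps to $\psi^{(q-1)/3}$, so $\sqrt{-3}=2\omega+1$ maps to $2\psi^{(q-1)/3}+1$. The congruence $J_1\equiv0$ modulo that prime, which follows from Stickelberger or from the elementary computation $J_1\equiv$ binomial coefficient, yields $u+v(2\psi^{(q-1)/3}+1)\equiv0$. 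Rearranged, this is the stated condition $3v\equiv u(2\psi^{(q-1)/3}+1)\pmod p$.

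\textbf{Step 5: expand.} Expand $(u+v\sqrt{-3})^n=\sum_t\binom{n}{t}u^{n-t}v^t(\sqrt{-3})^t$. For even $t$, $(\sqrt{-3})^t=3^{t/2}(-1)^{t/2}$. The sign $(-1)^{t/2}$ is $-$ when $v_2(t)=1$ and $+$ when $v_2(t)\ge2$, so the real part is exactly $E(u,v,n)$. For odd $t$, $(\sqrt{-3})^t=\sqrt{-3}\,3^{(t-1)/2}(-1)^{(t-1)/2}$. Factoring out $\sqrt{-1}/\sqrt3$, the imaginary part becomes $O(u,v,n)/\sqrt3$, with signs governed by $v_2(t+1)$.

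Then $2\,\mathrm{Re}(J_n)=\frac{(-1)^{n-1}}{2^{n-1}}E$. Multiplying by $\omega^{\mp1}=\frac{-1\mp\sqrt{-3}}{2}$ and taking real parts gives $\frac{(-1)^n}{2^n}(E\mp O)$, matching the three cases.

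\textbf{Main obstacle.} The hard part is keeping the conventions consistent across Steps 4 and 5. These are the sign in Hasse--Davenport, the choice of $u\equiv1$ versus $-1\pmod 3$, and which of $\omega^{\pm1}$ corresponds to $m=1$. The last one ties to the sign of $v$ through the mod-$p$ congruence. We would pin these down by checking against a small case, such as $p=7$, $n=1$, where everything can be verified by direct count.
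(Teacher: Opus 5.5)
The paper does not prove this lemma; it is quoted from \cite{FRQ}. Your character-sum argument would therefore be a self-contained replacement, and its overall structure is sound. Steps 1--3 are correct, including $\delta(a_1,a_2)=-2\,\mathrm{Re}\,\chi(a_1a_2^2)$ from $J(\chi,\bar\chi)=-1$. Hasse--Davenport in the form $J_n=-(-J_1)^n$ is correct. The rearrangement $u+v(2\psi^{(q-1)/3}+1)\equiv0 \iff 3v\equiv u(2\psi^{(q-1)/3}+1)$ is correct. The binomial bookkeeping $(u+v\sqrt{-3})^n=E(u,v,n)+\sqrt{-1}\,O(u,v,n)/\sqrt3$ in Step 5 is also correct.

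\textbf{The genuine error is the normalization of $J_1$ in Step 4.} It is $J_1$ itself, not $-J_1$, that equals $\frac{u+v\sqrt{-3}}{2}$ with $u\equiv1\pmod 3$.

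The classical fact is $J_1\equiv-1\pmod 3$ in $\mathbb{Z}[\omega]$, i.e.\ $J_1=a+b\omega$ with $a\equiv-1$, $b\equiv0\pmod 3$. Rewriting gives $J_1=\frac{(2a-b)+b\sqrt{-3}}{2}$ with $2a-b\equiv1\pmod 3$. You carried the congruence $a\equiv-1$ from the $a+b\omega$ form over to the $\frac{a+b\sqrt{-3}}{2}$ form, where it does not hold. Concretely, for $p=7$ and $\chi(3)=\omega$ one computes $J_1=2+3\omega^2=\frac{1-3\sqrt{-3}}{2}$, so $u=1$, $v=-3$, in agreement with the paper's first example.

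Taken at face value, your Step 4 combined with $J_n=-(-J_1)^n$ gives $J_n=-\bigl(\frac{u+v\sqrt{-3}}{2}\bigr)^n$. Hence $2\,\mathrm{Re}\,J_n=-E(u,v,n)/2^{n-1}$, which has the wrong sign whenever $n$ is odd. For $p=7$, $n=1$, $a_1=a_2=c=1$, it predicts $7-1-2=4$ solutions of $x^3+y^3=1$, but there are $6$.

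Your Step 5 line $2\,\mathrm{Re}(J_n)=\frac{(-1)^{n-1}}{2^{n-1}}E$ is the correct value, but it does not follow from Step 4 as you wrote it. So the argument is internally inconsistent, and the $p=7$, $n=1$ check you propose would have exposed exactly this.

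The repair is to state $J_1=\frac{u+v\sqrt{-3}}{2}$, so that $J_n=(-1)^{n-1}\bigl(\frac{u+v\sqrt{-3}}{2}\bigr)^n$; Step 5 then goes through verbatim.

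A smaller imprecision in the same step concerns the congruence that fixes the sign of $v$. It is $J_1\equiv0$ modulo the prime $\mathfrak P$ with $\omega\equiv\psi^{(q-1)/3}$. This comes from $\chi_1(x)\equiv x^{f}$ with $f=\frac{p-1}{3}$. Expanding $(1-x)^f$ in $\sum_x x^f(1-x)^f$, every exponent $f+k$ (for $0\le k\le f$) satisfies $1\le f+k\le 2f<p-1$, so the sum vanishes. The binomial coefficient $-\binom{2f}{f}$ is the value of $J_1$ modulo the conjugate prime, not this one.
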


\begin{lemma}\label{crslemma}\rm
Let \(p\) be an odd prime with \(p \equiv 1 \pmod{3}\), and let \(n\) be a positive integer. Then
\begin{equation*}
    \left \{
\begin{array}{ll}
|C_{0,0}| =\frac{1}{9}\big(p^n+\frac{{(-1)}^{n-1}}{2^{n-1}}E(u,v,n)-8\big), \\
|C_{1,2}| =|C_{2,1}| =\frac{1}{9}\big(p^n+\frac{{(-1)}^{n-1}}{2^{n-1}}E(u,v,n)+1\big), \\
|C_{0,1}|= |C_{1,0}|=|C_{2,2}| =\frac{1}{9}\big(p^n+\frac{{(-1)}^{n}}{2^{n}}\big(E(u,v,n)-O(u,v,n)\big)-2\big),\\
|C_{0,2}| =|C_{2,0}|=|C_{1,1}| =\frac{1}{9}\big(p^n+\frac{{(-1)}^{n}}{2^{n}}\big(E(u,v,n)+O(u,v,n)\big)-2\big).
\end{array}\right.
\end{equation*}
The notations \(E(u,v,n)\), \(O(u,v,n)\), and the integers \(u,v\) are the same as in Theorem \ref{13distribution}.
\end{lemma}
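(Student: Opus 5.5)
The plan is to express each $|C_{i,j}|$ through the solution counts $N_1$ of diagonal cubic equations in Lemma \ref{p 1 3 squ}, and then use the symmetries of the cyclotomic classes to reduce the nine cardinalities to a few distinct values.

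Fix $i,j\in\{0,1,2\}$. An element $x\in C_{i,j}$ is the same as a pair $(x,x+1)$ with $x\in\mathcal{D}_j$, $x+1\in\mathcal{D}_i$. Writing $x=\psi^{j}y^3$ and $x+1=\psi^{i}z^3$ with $y,z\in\mathbb{F}_{p^n}^*$, we get
\[
\psi^{i}z^3-\psi^{j}y^3=1 .
\]
Each $x\in C_{i,j}$ yields exactly $9$ pairs $(y,z)$, because cubing is $3$-to-$1$ on $\mathbb{F}_{p^n}^*$ when $3\mid p^n-1$, which holds since $p\equiv1\pmod 3$. Hence $9|C_{i,j}|$ equals the number of solutions of $a_1x_1^3+a_2x_2^3=1$ with $x_1x_2\neq0$, where $a_1=\psi^i$ and $a_2=-\psi^j$. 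Since $-1=(-1)^3$ is a cube, we may take $a_2=\psi^j$ up to replacing $x_2$ by $-x_2$.

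Next we subtract the solutions with $x_1=0$ or $x_2=0$ from $N_1$.
\begin{itemize}
\item If $x_2=0$, then $\psi^ix_1^3=1$. This has $3$ solutions when $i\equiv0$ and none otherwise.
\item If $x_1=0$, then $-\psi^jx_2^3=1$. This has $3$ solutions when $j\equiv 0$ and none otherwise.
\end{itemize}
Therefore
\[
9|C_{i,j}|=N_1-3[i\equiv0]-3[j\equiv0].
\]
Lemma \ref{p 1 3 squ} applies with $\operatorname{ind}(a_1a_2c)\equiv i+j$ and $\operatorname{ind}(a_1a_2^2)\equiv i+2j \pmod 3$. The sign $-1$ contributes nothing modulo $3$, since $\operatorname{ind}(-1)=(p^n-1)/2$, and $3\mid (p^n-1)/2$ because $p\equiv1\pmod 3$.

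The case check then goes as follows.
\begin{itemize}
\item $(0,0)$: $i+j\equiv0$ and $\delta=-2$, so $9|C_{0,0}|=p^n+\frac{(-1)^{n-1}}{2^{n-1}}E-2-6$, giving the $-8$.
\item $(1,2)$ and $(2,1)$: $i+j\equiv0$ and $i+2j\not\equiv0$, so $\delta=1$ and nothing is subtracted, giving $+1$.
\item $(0,1)$, $(1,0)$, $(2,2)$: $i+j\equiv1$. In each case either exactly one of $i,j$ is $0$ with $\delta=1$, or $(2,2)$ has $i+2j\equiv0$ with $\delta=-2$. Either way the constant is $-2$, matching the $E-O$ line.
\item $(0,2)$, $(2,0)$, $(1,1)$: $i+j\equiv2$, and by the same bookkeeping the constant is $-2$, matching the $E+O$ line.
\end{itemize}

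The main point needing care is the dependence of Lemma \ref{p 1 3 squ} on the choice of $\psi$ through $v$ (via $3v\equiv u(2\psi^{(p^n-1)/3}+1)$). We use the same $\psi$ throughout, both for the index classes and for $(u,v)$, so the cases line up consistently. No further obstacle is expected.
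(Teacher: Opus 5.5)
Your proposal is correct and follows the same route as the paper. You write $x=\psi^j y^3$, $x+1=\psi^i z^3$, count solutions of $\psi^i z^3-\psi^j y^3=1$ via Lemma~\ref{p 1 3 squ}, discard those with a zero coordinate, and divide by $9$. Your bookkeeping $9|C_{i,j}|=N_1-3[i\equiv 0]-3[j\equiv 0]$, with $\operatorname{ind}_{\psi}(a_1a_2c)\equiv i+j$ and $\operatorname{ind}_{\psi}(a_1a_2^2)\equiv i+2j \pmod 3$, also explicitly works out the eight cases the paper only dismisses as ``a similar analysis.''
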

\begin{proof}
In order to compute $|C_{0,0}|$, we need to count the number of solutions to the system
\begin{equation}\label{001b}
    \left \{
\begin{array}{ll}
y_1-y_2=1,\\
\operatorname{ind}_{\psi}(y_1)=\operatorname{ind}_{\psi}(y_2)\equiv 0\,\,({\rm mod}\,\,3).
\end{array}\right.
\end{equation}
The condition $\operatorname{ind}_{\psi}(y_1)=\operatorname{ind}_{\psi}(y_2)\equiv 0\pmod{3}$ implies that we may set
$y_1 = \psi^{3a}$ and $y_2 = \psi^{3b}$, where $a, b \in \{0, 1, \dots, \frac{p^n-4}{3}\}$.
Substituting into $y_1 - y_2 = 1$, we obtain
\begin{equation}\label{3a3b1}
\psi^{3a} - \psi^{3b} = 1.
\end{equation}
Let $x = \psi^{a}$ and $y = \psi^{b}$. Then $x, y \in \mathbb{F}_{p^{2m}}^*$, and the equation becomes
\begin{equation}\label{bu1}
x^{3} - y^{3} = 1.
\end{equation}
Applying Lemma \ref{p 1 3 squ} with $a_1 =1 $, $a_2 = \psi^{\frac{p^n-1}{2}}$ and $c= 1$, the number of solutions to (\ref{bu1}) in $\left(\mathbb{F}_{p^{n}}\right)^2$  is
\[
N_1 = p^n+\frac{{(-1)}^{n-1}}{2^{n-1}}E(u,v,n)-2.
\]
Observe that if $(x_0, y_0)$ is a solution to \eqref{bu1}, then for any $k_1, k_2 \in \{0, 1, 2\}$,
\(
\left(x_0 \psi^{k_1(p^n-1)/3}, y_0 \psi^{k_2(p^n-1)/3}\right)
\)
is also a solution. The equation \eqref{bu1} has  six solutions with \(x = 0\) or \(y = 0\). Therefore, the number of solutions to \eqref{bu1} in \((\mathbb{F}_{p^{n}}^*)^2\) is
\[
N'_1 = p^n + \frac{(-1)^{n-1}}{2^{n-1}}E(u,v,n) - 8.
\]
Since each pair $(a,b)$ satisfying (\ref{3a3b1}) corresponds to $9$ distinct pairs $(x,y)\in\left(\mathbb{F}_{p^{n}}\right)^2$ satisfying (\ref{bu1}), the number of solutions to the original system \eqref{001b} is
\[
\frac{N'_1}{9} = \frac{1}{9}\left(p^n+\frac{{(-1)}^{n-1}}{2^{n-1}}E(u,v,n)-8\right).
\]
Hence, $|C_{0,0}| = \frac{1}{9}\left(p^n+\frac{{(-1)}^{n-1}}{2^{n-1}}E(u,v,n)-8\right)$.

In order to compute $|C_{i,j}|$, we need to count the number of solutions to the system
\begin{equation}\label{001b11}
    \left \{
\begin{array}{ll}
\psi^i y_1-\psi^jy_2=1,\\
\operatorname{ind}_{\psi}(y_1)=\operatorname{ind}_{\psi}(y_2)\equiv 0\,\,({\rm mod}\,\,3).
\end{array}\right.
\end{equation}
For the remaining cases, a similar analysis yields results.
\end{proof}

\section{Proof of Theorem~\ref{13distribution}}

In this section we complete the proof of Theorem~\ref{13distribution}. First, we begin with two lemmas.

\begin{lemma}\label{wf21value} \rm
Let $d=\frac{p^{n}-1}{3}+p^i$,
where $p\equiv 1\,\,({\rm mod}\,\,3)$, $\frac{1}{3}p^{-i}({p^{n}-1})
\not\equiv 2\,\,({\rm mod}\,\,3)$ and $0\leq i<n$. When $(u,v)$ runs through $\mathbb{F}_{p^n}\times\mathbb{F}_{p^n}^*$, $W_d(u,v)$ takes value in the set

\begin{equation*}
\begin{aligned}
\bigg\{ &0,\ 
3\eta_{\psi^j} + 1,\ 
2\eta_{\psi^j} + \eta_{\psi^{j+1}} + 1,\ 
2\eta_{\psi^j} + \eta_{\psi^{j+2}} + 1,\ 
2\eta_{\psi^j} + \frac{p^n + 2}{3} \ \bigg| \ j = 0, 1, 2 \bigg\},
\end{aligned}
\end{equation*}
if $\frac{1}{3}p^{-i}({p^{n}-1})\equiv 1\,\,({\rm mod}\,\,3)$ and
\begin{equation*}
\begin{aligned}
\bigg\{ &0,\ 
3\eta_{\psi^j} + 1,\ 
2\eta_{\psi^j} + \eta_{\psi^{j+1}} + 1,\ 
2\eta_{\psi^j} + \eta_{\psi^{j+2}} + 1,\ 
\eta_{\psi^j} +\eta_{\psi^{j+1}}+ \frac{p^n + 2}{3} \ \bigg| \ j = 0, 1, 2 \bigg\},
\end{aligned}
\end{equation*}
if $\frac{1}{3}p^{-i}({p^{n}-1})\equiv 0\,\,({\rm mod}\,\,3)$.
\end{lemma}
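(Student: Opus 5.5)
The plan is to split $\mathbb{F}_{p^n}^*$ into the three cubic cosets $\mathcal{D}_0,\mathcal{D}_1,\mathcal{D}_2$. On each coset the exponent $d$ acts as a linearized monomial, so $W_d(u,v)$ becomes $1$ plus a sum of three Gaussian periods, or of two Gaussian periods and a coset size. Write $q=p^n$ and $\omega=\psi^{(q-1)/3}$. Since $p\equiv1\pmod 3$, we have $\omega\in\mathbb{F}_p$, so $\omega^{p^k}=\omega$ for all $k$. For $x\in\mathcal{D}_j$ we have $x^{(q-1)/3}=\omega^j$, hence $x^d=\omega^j x^{p^i}$ and
\[
\Tr_1^n(vx^d)=\Tr_1^n\big((v\omega^j)^{p^{n-i}}x\big)=\Tr_1^n(\omega^j w x),\qquad w:=v^{p^{n-i}}\neq 0 .
\]
Therefore
\[
W_d(u,v)=1+\sum_{j=0}^{2}\sum_{x\in\mathcal{D}_j}\zeta_p^{\Tr_1^n(a_jx)},\qquad a_j:=u-\omega^j w .
\]
Since $y\mapsto y^3$ is $3$-to-$1$ from $\mathbb{F}_{q}^*$ onto $\mathcal{D}_0$, we get $\sum_{x\in\mathcal{D}_j}\zeta_p^{\Tr_1^n(ax)}=\eta_{a\psi^j}$ for $a\neq0$, and this sum equals $(q-1)/3$ for $a=0$. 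Moreover $\eta_c$ depends only on the coset of $c$, and $\eta_{1}+\eta_{\psi}+\eta_{\psi^2}=-1$.

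The first case is that every $a_j$ is nonzero. Then $W_d(u,v)=1+\eta_{c_0}+\eta_{c_1}+\eta_{c_2}$ with $c_j=a_j\psi^j$, and I will simply enumerate the possible multisets of coset classes of $c_0,c_1,c_2$.
\begin{itemize}
\item If all three classes are equal, we get $3\eta_{\psi^j}+1$.
\item If exactly two classes are equal, we get $2\eta_{\psi^j}+\eta_{\psi^{j+1}}+1$ or $2\eta_{\psi^j}+\eta_{\psi^{j+2}}+1$.
\item If all three classes are distinct, we get $1+(\eta_1+\eta_\psi+\eta_{\psi^2})=0$.
\end{itemize}
So in this case every value lies in the stated set, in both cases (i) and (ii). No arithmetic input is needed here.

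The second case is that $u=\omega^k w$ for some $k$. Since the $\omega^j w$ are distinct, exactly one $a_j$ vanishes, namely $a_k$. Then
\[
W_d(u,v)=1+\frac{q-1}{3}+\eta_{c_{k+1}}+\eta_{c_{k+2}},\qquad c_j=w(\omega^k-\omega^j)\psi^j=w\omega^k(1-\omega^{j-k})\psi^j .
\]
Note that $1+\frac{q-1}{3}=\frac{q+2}{3}$, matching the stated form. The claim now reduces to the following dichotomy for the classes of $c_{k+1}$ and $c_{k+2}$.
\begin{itemize}
\item In case (i) the two classes coincide, giving $2\eta_{\psi^j}+\frac{q+2}{3}$.
\item In case (ii) they differ, giving $\eta_{\psi^j}+\eta_{\psi^{j+1}}+\frac{q+2}{3}$ after relabelling, since $\{j,j+2\}=\{j',j'+1\}$ with $j'=j+2$.
\end{itemize}
Because the factor $w\omega^k$ is common to both, the difference of the two indices mod $3$ is
\[
\operatorname{ind}_\psi(1-\omega)+1-\big(\operatorname{ind}_\psi(1-\omega^2)+2\big)=-\operatorname{ind}_\psi(1+\omega)-1\equiv -\operatorname{ind}_\psi(-\omega^2)-1 \pmod 3 .
\]
Here $-1$ is a cube, and $\operatorname{ind}_\psi(\omega)=\frac{q-1}{3}$. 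So the difference is $\equiv -\frac{2(q-1)}{3}-1\equiv \frac{q-1}{3}-1\pmod 3$, and it vanishes exactly when $\frac{q-1}{3}\equiv1\pmod 3$.

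Since $p\equiv1\pmod 3$ gives $p^{-i}\equiv1\pmod 3$, the hypothesis $\frac13p^{-i}(q-1)\equiv1$ (resp. $\equiv0$) is the same as $\frac{q-1}{3}\equiv1$ (resp. $\equiv0$). This yields exactly the two lists in the statement. The excluded residue $2$ is not needed for the value set itself; presumably it is what makes $\gcd(d,q-1)=1$.

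The main obstacle I expect is this bookkeeping of indices mod $3$ in the second case: keeping the coset shift $\psi^j$, the twist by $p^{n-i}$ acting on $v$, and the index of $1-\omega^{\pm1}$ straight. I will double-check it by verifying the identity $(1-\omega)(1-\omega^2)=3$ against the alternative route $\operatorname{ind}_\psi(1-\omega^2)=\operatorname{ind}_\psi(1-\omega)+\operatorname{ind}_\psi(1+\omega)$. I will also confirm that the possible sign or orientation of the $p^{n-i}$ power does not interchange $\omega$ and $\omega^2$, which it cannot, since $\omega\in\mathbb{F}_p$.
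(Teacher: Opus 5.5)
Your proposal is correct and takes essentially the same route as the paper. Both split $\mathbb{F}_{p^n}^*$ into the cubic cosets and linearize $x^d=\omega^j x^{p^i}$ via Frobenius to reduce $W_d(u,v)$ to Gaussian periods. In the degenerate case $u=\omega^k v^{p^{n-i}}$, both compare the indices of the two surviving coefficients using $1+\omega=-\omega^2$ and $-1\in\mathcal{D}_0$; this is exactly the paper's table of indices, with $\beta=\omega$ and $g\equiv\frac{p^n-1}{3}\pmod 3$. Your uniform handling through the common factor $w\omega^k$, and your remark that the hypothesis on $\frac13p^{-i}(p^n-1)$ is equivalent to one on $\frac{p^n-1}{3}$, are just cleaner bookkeeping of the same argument.
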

\begin{proof}
Note that every $x \in \mathbb{F}_{p^n}$ can be written as $x={\psi}^jy^3$ for some $y \in \mathbb{F}_{p^n}$ and $0\leq j\leq2$. From $y^{3d}=y^{3p^i}$ for $y \in \mathbb{F}_{p^n}$,
\begin{equation}\label{1mod3 s}
  \begin{aligned}
  3W_{d}(u,v)&=\sum\limits_{j=0}^{2}\sum\limits_{y \in \mathbb{F}_{p^n}}\zeta_p^{\Tr_1^{n}\left(u{\psi}^jy^3-v{({\psi}^jy^3)}^{d}\right)}\\
  &=\sum\limits_{j=0}^{2}\sum\limits_{y \in \mathbb{F}_{p^n}}\zeta_p^{\Tr_1^{n}\big({u{\psi}^jy^3-(v^{p^{-i}}{\psi}^{djp^{-i}}y^3)}^{p^i}\big)}\\
  &=\sum\limits_{j=0}^{2}\sum\limits_{y \in \mathbb{F}_{p^n}}\zeta_p^{\Tr_1^{n}\left(u{\psi}^jy^3-v^{p^{-i}}{\psi}^{djp^{-i}}y^3\right)}\\
  &=\sum\limits_{y \in \mathbb{F}_{p^n}}\zeta_p^{\Tr_1^{n}\left(y^3\left(u-v^{p^{-i}}\right)\right)}
  +\sum\limits_{y \in \mathbb{F}_{p^n}}\zeta_p^{\Tr_1^{n}\left(y^3\left(u-v^{p^{-i}}{\beta}\right)\psi\right)}
  +\sum\limits_{y \in \mathbb{F}_{p^n}}\zeta_p^{\Tr_1^{n}\left(y^3\left(u-v^{p^{-i}}{\beta}^2\right){\psi}^2\right)},\\
\end{aligned}
\end{equation}
where $\beta={\psi}^{p^{-i}(p^n-1)/3}$. Denote
\begin{equation*}
  \mathcal{G}:=\big\{(u,v)\in\mathbb{F}_{p^n}\times\mathbb{F}_{p^n} \mid (u-v^{p^{-i}}{\beta}^j)\neq0, \mbox{ for } j=0,1,2\big\}.
\end{equation*}

(i) Let $(u,v)\in\mathcal{G}$. According to (\ref{1mod3 s}), $3W_{d}(u,v)$ can take on at most ten values:
\[\eta_{1}+\eta_{\psi}+\eta_{\psi^2}+1=0, 3\eta_1+1, 3\eta_{\psi}+1, 3\eta_{\psi^2}+1, 2\eta_{1}+\eta_{\psi}+1,\] 
\[2\eta_{\psi}+\eta_{\psi^2}+1, 2\eta_{\psi^2}+\eta_{1}+1, 2\eta_{1}+\eta_{\psi^2}+1, 2\eta_{\psi}+\eta_{1}+1, 2\eta_{\psi^2}+\eta_{\psi}+1.\]

(ii) Let $(u,v)\not\in\mathcal{G}$ and $v\neq0$. Put $\operatorname{ind}_{\psi}(1-\beta_1)=r$, $\operatorname{ind}_{\psi}(0)=-\infty$ and $g=p^{-i}(p^n-1)/3=\operatorname{ind}_{\psi}(\beta)$. Since $1+\beta=-\beta^2$, $-1\in\mathcal{D}_0$, then 
\[\operatorname{ind}_{\psi}(1+\beta)=\operatorname{ind}_{\psi}(-\beta^2)\equiv 2g\,\,({\rm mod}\,\,3).\] 
For fixed $v\in\mathbb{F}_{p^n}^*$, let $\operatorname{ind}_{\psi}\big(v^{p^{-i}}\big)=s$, the indices of $u-v^{p^{-i}}$, $(u-v^{p^{-i}}{\beta})\psi$ and $(u-v^{p^{-i}}{\beta}^2)\psi^2$ then become:
\begin{table}[H]
\centering
 \caption{The indices of $u-v^{p^{-i}}$, $(u-v^{p^{-i}}{\beta})\psi$ and $(u-v^{p^{-i}}{\beta}^2)\psi^2$}
\label{ind table}
\begin{tabular}{|l|l|l|l|l|}
\hline
 & $\operatorname{ind}_{\psi}\big(u-v^{p^{-i}}\big)$ & $\operatorname{ind}_{\psi}\big((u-v^{p^{-i}}{\beta})\psi\big)$& 
 $\operatorname{ind}_{\psi}\big((u-v^{p^{-i}}{\beta^2})\psi^2\big)$\\
[0.5ex]
\hline
$u=v^{p^{-i}}$ & $-\infty$ & $s+r+1$  & $s+2g+r+2$ \\
\hline
$u=v^{p^{-i}}\beta$ & $s+r$ & $-\infty$  & $s+g+r+2$ \\
\hline
$u=v^{p^{-i}}\beta^2$ & $s+r+2g$ & $s+g+r+1$  & $-\infty$ \\
\hline
\end{tabular}
\end{table}
For fixed $v\in\mathbb{F}_{p^n}^*$, according to $\eta_0=\frac{p^n-1}{3}$, (\ref{1mod3 s}) and Table \ref{ind table}, when $g=\frac{1}{3}p^{-i}({p^{n}-1})\equiv 1\,\,({\rm mod}\,\,3)$,
\begin{equation}\label{oncetime}
W_d(u,v) = 2\eta_{\psi^j} + \frac{p^n + 2}{3} \quad \text{occurs once for any } j=0,1,2;
\end{equation}
when $g=\frac{1}{3}p^{-i}({p^{n}-1})\equiv 0\,\,({\rm mod}\,\,3)$,
\[
W_d(u,v) = \eta_{\psi^j} + \eta_{\psi^{j+1}} + \frac{p^n + 2}{3}\quad \text{occurs once for any } j=0,1,2.
\]
\end{proof}

\begin{remark}\rm
The values \(\eta_{\psi^j}\) (\(j = 0, 1, 2\)) in Lemma \ref{wf21value} are Gaussian periods, which are difficult to compute and rarely evaluated explicitly. When \(p \equiv 1 \pmod{3}\), the values of \(\eta_1\) were studied in \cite[p. 422]{BEW}, and their accurate values depend on some angle (see Lemma \ref{gauss3}). We shall later present an alternative form of \(\eta_{\psi^j}\) \((j = 0, 1, 2)\). Although this new version still does not uniquely determine the exact values, it is sufficient to determine \(C_{\tau}(s,s')\).
\end{remark}

\begin{lemma}\rm\label{gauss3}
Let \(p\) be an odd prime with \(p\equiv 1\,\,({\rm mod}\,\,3)\), and let \(n\) be a positive integer. Set \(q = p^n\). Let $\psi$ be a primitive element of $\mathbb{F}_{q}$.  Then
\[
\left\{\eta_{\psi^j} \ \bigg| \ j = 0, 1, 2 \right\}
=\left\{ \frac{2(-1)^{n+1}p^{n/2}}{3}   \cos\left( \frac{n\theta}{3} - \frac{2\pi l}{3} \right)-\frac{1}{3} \ \bigg| \ l = 0, 1, 2 \right\},
\]
\[
\left\{2\eta_{\psi^j}+\eta_{\psi^{j+1}} \ \bigg| \ j = 0, 1, 2 \right\}
=\left\{ \frac{2\sqrt{3}(-1)^{n+1}p^{n/2}}{3} \cos\left( \frac{n\theta}{3} -\frac{\pi(4l+1)}{6} \right) - 1 \ \bigg| \ l = 0, 1, 2 \right\},
\]
\[
\left\{2\eta_{\psi^j}+\eta_{\psi^{j+2}} \ \bigg| \ j = 0, 1, 2 \right\}
=\left\{ \frac{2\sqrt{3}(-1)^{n+1}p^{n/2}}{3} \cos\left( \frac{n\theta}{3} -\frac{\pi(4l-1)}{6} \right) - 1 \ \bigg| \ l = 0, 1, 2 \right\},
\]
\[
\left\{\eta_{\psi^j}+\eta_{\psi^{j+1}} \ \bigg| \ j = 0, 1, 2 \right\}
=\left\{\frac{2(-1)^{n+1}p^{n/2}}{3} \cos\left( \frac{n\theta}{3} - \frac{\pi(2l+1)}{3} \right)-\frac{2}{3} \ \bigg| \ l = 0, 1, 2 \right\},
\]
where $\theta =  \operatorname{sgn}(v) \cdot \arccos\frac{u}{2\sqrt{p}}$ and
 \(u, v\) are the integers uniquely determined by
\[
  u^2 + 3v^2=4p,  u \equiv 1\,\,({\rm mod}\,\,3), v\equiv 0\,\,({\rm mod}\,\,3) \mbox{ and } 3v\equiv u(2\psi^{(p^n-1)/3}+1)\,\,({\rm mod}\,\,p).
\]
\end{lemma}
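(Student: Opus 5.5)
The plan is to reduce everything to the cubic Gaussian periods over $\mathbb{F}_p$ and then lift to $\mathbb{F}_{p^n}$ via the Davenport–Hasse theorem.

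Let $\chi$ be the cubic multiplicative character of $\mathbb{F}_q$ with $\chi(\psi)=\omega:=e^{2\pi\sqrt{-1}/3}$. Let $G(\chi)=\sum_{x\in\mathbb{F}_q^*}\chi(x)\zeta_p^{\Tr_1^n(x)}$ be its Gauss sum. Expanding the indicator of $\mathcal{D}_0$ with characters gives
\[
\eta_{\psi^j}=\frac13\Big(-1+\bar\omega^{\,j}G(\chi)+\omega^{j}G(\bar\chi)\Big),
\]
since $\sum_{y}\zeta_p^{\Tr(\psi^j y^3)}$ over $y\ne0$ equals $\sum_{k=0}^2\chi^k(\psi^{-j})G(\chi^k)$, and $G(\chi^0)=-1$. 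The restriction of $\chi$ to $\mathbb{F}_p$ is $\chi_1^{\,n}$-related; more precisely, $\chi=\chi_1\circ N$ with $\chi_1$ cubic on $\mathbb{F}_p$, because $N(\psi)$ generates $\mathbb{F}_p^*$. Davenport–Hasse then gives
\[
G(\chi)=(-1)^{n-1}G(\chi_1)^n .
\]

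Next I would use the classical evaluation over $\mathbb{F}_p$ (\cite{BEW}). One has $G(\chi_1)^3=pJ$ with $J=J(\chi_1,\chi_1)=\frac{u+3\sqrt{-3}\,v}{2}$ up to the normalization fixing the signs of $u,v$. This normalization is exactly the conditions $u\equiv1\pmod 3$ and $3v\equiv u(2\psi^{(p^n-1)/3}+1)\pmod p$; those conditions also record $\chi(\psi)=\omega$. Thus $|J|=\sqrt p$ and $\arg J=\theta$, so
\[
G(\chi_1)=\sqrt p\,e^{\sqrt{-1}(\theta+2\pi k)/3}
\]
for some $k\in\{0,1,2\}$. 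Hence
\[
G(\chi)=(-1)^{n-1}p^{n/2}e^{\sqrt{-1}(n\theta/3+2\pi kn/3)} .
\]
Since $G(\bar\chi)=\overline{G(\chi)}$ (because $\chi(-1)=1$), we get
\[
\eta_{\psi^j}=-\frac13+\frac{2(-1)^{n+1}p^{n/2}}{3}\cos\!\Big(\frac{n\theta}{3}+\frac{2\pi(kn-j)}{3}\Big).
\]
As $j$ runs over $0,1,2$, the phase $2\pi(kn-j)/3$ runs over all residues. Writing it as $-2\pi l/3$ gives the first set identity; the unknown $k$ only permutes the labels, which is why only set equalities are asserted.

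The other three identities follow by trigonometric combination. Write $A=\frac{2(-1)^{n+1}p^{n/2}}{3}$ and $\phi_l=\frac{n\theta}{3}-\frac{2\pi l}{3}$. If $\eta_{\psi^j}$ corresponds to $\phi_l$, then $\eta_{\psi^{j+1}}$ corresponds to $\phi_{l+1}$, with a fixed orientation determined by the sign convention above. Then
\[
2\eta_{\psi^j}+\eta_{\psi^{j+1}}=-1+A\big(2\cos\phi_l+\cos\phi_{l+1}\big).
\]
Next, $\eta_{\psi^j}+\eta_{\psi^{j+1}}=-\frac23-\eta_{\psi^{j+2}}+\big(\eta_1+\eta_\psi+\eta_{\psi^2}+\frac13\big)$. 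Here $\eta_1+\eta_\psi+\eta_{\psi^2}=-1$, so this equals $-\frac23-A\cos\phi_{l+2}$. Moreover $-\cos x=\cos(x-\pi)$, which produces the phase $\frac{\pi(2l'+1)}{3}$.

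For the weighted sums, $2\cos\phi+\cos(\phi-2\pi/3)$ equals $\sqrt3\cos(\phi-\pi/6)$. This follows from adding the phasors $2e^{\sqrt{-1}\phi}$ and $e^{\sqrt{-1}(\phi-2\pi/3)}$: their sum has modulus $\sqrt3$ and argument $\phi-\pi/6$. Similarly, $2\cos\phi+\cos(\phi+2\pi/3)=\sqrt3\cos(\phi+\pi/6)$. Reindexing $l$ gives the phases $\frac{\pi(4l\pm1)}{6}$ stated in the lemma.

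The main obstacle is the sign and orientation bookkeeping. One must check that the normalization $3v\equiv u(2\psi^{(p^n-1)/3}+1)\pmod p$ makes $\arg J=\theta$, with $\theta=\operatorname{sgn}(v)\arccos\frac{u}{2\sqrt p}$, rather than $-\theta$. One must also check whether "$j+1$" corresponds to $l+1$ or $l-1$, since this decides which of the $\pm\pi/6$ families goes with $\eta_{\psi^{j+1}}$ versus $\eta_{\psi^{j+2}}$. I would settle both by the standard congruence $J\equiv$ (a value determined by $\chi_1$) modulo the prime above $p$, as in \cite{BEW}, and then sanity-check on a small case such as $p=7$, $n=1$.
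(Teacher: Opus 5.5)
Your proof is correct, and it takes a genuinely different route from the paper.

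\textbf{How the two routes differ.} The paper uses the Stickelberger-type factorization $G_n(\chi^k)=G(\chi^{*k})E_n(\chi^k)$, or $-pE_n(\chi^k)$ when the restriction $\chi^*$ is trivial. It then:
\begin{itemize}
\item splits into the cases $n\equiv 0,1,2 \pmod 3$;
\item inserts the explicit Eisenstein sums of \cite{BEW};
\item for $3\nmid n$, brings in Cassels' product $H(\chi^*)$ with $H^3=J^{-2}$;
\item only at the end absorbs the factor $\omega^{j_1}$ and the multivaluedness of $Z^{n/3}$ into an unspecified shift of $l$.
\end{itemize}
You instead observe that $\chi=\chi_1\circ N$ and apply Davenport--Hasse: $G(\chi)=(-1)^{n-1}G(\chi_1)^n$, with $G(\chi_1)^3=pJ(\chi_1,\chi_1)$. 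This is uniform in $n$ and needs only the cubic Jacobi sum over $\mathbb{F}_p$. It also isolates the whole ambiguity in the single factor $e^{2\pi\sqrt{-1}kn/3}$. So for $3\mid n$ the labelling is exact ($l=j$, agreeing with the paper's Case 1), and otherwise it is a fixed cyclic shift, which is why only set equalities are asserted. You also supply the phasor computations behind the second to fourth identities. The paper leaves these implicit after noting that $j\mapsto j+1$ corresponds to $l\mapsto l+1$.

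\textbf{Two small slips, neither affecting the argument.}
\begin{itemize}
\item With the paper's normalization ($3\mid v$), the Jacobi sum is $J=\frac{u+v\sqrt{-3}}{2}$, not $\frac{u+3\sqrt{-3}\,v}{2}$. Your claims $|J|=\sqrt p$ and $\arg J=\theta$ require the former.
\item Your displayed decomposition of $\eta_{\psi^j}+\eta_{\psi^{j+1}}$ is off by a constant. Simply use $\eta_{\psi^j}+\eta_{\psi^{j+1}}=-1-\eta_{\psi^{j+2}}$, which gives your (correct) final expression $-\frac23-A\cos\phi_{l+2}$.
\end{itemize}

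\textbf{The deferred orientation check works out as you need.} Put $g=N(\psi)=\psi^{(p^n-1)/(p-1)}$. Then $\psi^{(p^n-1)/3}=g^{(p-1)/3}$ and $\chi_1(g)=\omega$. So the paper's normalization of $(u,v)$ is exactly the classical one for $J(\chi_1,\chi_1)$ in \cite{BEW}, and $\arg J=+\theta$. Combined with the factor $\bar\omega^{\,j}$, this places $2\eta_{\psi^j}+\eta_{\psi^{j+1}}$ in the $(4l+1)$ family, as the lemma states. As a check, for $p=7$, $g=3$ a direct computation gives $J=2+3\omega^2=\frac{1-3\sqrt{-3}}{2}$, i.e.\ $u=1$, $v=-3$, matching the paper's first example. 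Only the second and third identities depend on this orientation. The first and fourth are invariant under $\theta\mapsto-\theta$, which merely swaps the $(4l+1)$ and $(4l-1)$ families.
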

\begin{proof}
Let  \(e(x)=\zeta_p^{\Tr_1^n(x)}\). Let $\psi$ be a primitive element of $\mathbb{F}_{q}$. Let \(\chi\) be the character on \(\mathbb{F}_q\) of order 3 given by $\chi(\psi)=\omega=e^{2\pi i/3}$. As \(\alpha\) runs through \(\mathbb{F}_q^*\), \(\alpha^3\) runs through the set of nonzero cubes in \(\mathbb{F}_q\), each occurring three times.
Hence
\begin{align*}
\sum\limits_{x\in \mathbb{F}_{q}}\zeta_p^{\Tr_1^n(\psi^j x^3)}
&=\sum_{\alpha \in \mathbb{F}_q} e(\psi^j\alpha^3) = 1 + \sum_{\alpha \in \mathbb{F}_q^*} e(\psi^j\alpha) \left( 1 + \chi(\alpha) + \chi^2(\alpha) \right)\\
& = \sum_{k=1}^{2} G_n(\psi^j,\chi^k) = \sum_{k=1}^{2} \chi^k(\psi^{-j}) G_n(\chi^k),
\end{align*}
where \(G_n(\psi^j, \chi^k) = \sum_{\alpha \in \mathbb{F}_q^*} e(\psi^j\alpha) \chi^k(\alpha)\) is the Gauss sum, and the last equality follows from \(G_n(\psi^j, \chi^k) = \chi^k(\psi^{-j}) G_n(\chi^k)\). Here \(\chi(\psi^{-1}) = \overline{\chi}(\psi)\), and since \(\chi\) is a character of order 3, we have \(\chi^2(\psi^{-1}) = \chi(\psi)\). Therefore,
\[
\sum_{\alpha \in \mathbb{F}_q} e(\psi^j\alpha^3) = \overline{\chi}(\psi^j) G_n(\chi) + \chi(\psi^j) G_n(\chi^2).
\]
Let \(\chi^*\) denote the restriction of \(\chi\) to \(\mathbb{F}_p\). By \cite[Proposition 11.4.1]{BEW}, the order of the restricted character \(\chi^*\) is
\begin{equation*}
    k^* = \frac{3}{\gcd(3, (q-1)/(p-1))} = \frac{3}{\gcd(3, 1+p+\cdots+p^{n-1})}=
	\left\{ \begin{array}{llll}
		1 &\mbox{if}& n \equiv 0\,\,({\rm mod}\,\,3),\\
		3 &\mbox{if}& n \not\equiv 0\,\,({\rm mod}\,\,3).
	\end{array}\right.
\end{equation*}

{\textbf{Case 1:}}  \(n \equiv 0 \pmod{3}\). It follows that  \(k^* = 1\).  For \(k = 1, 2\), applying \cite[Theorem 12.1.1]{BEW}, we have
\[
G_n(\chi^k) = -p E_n(\chi^k),
\]
where \(E_n(\chi^k) = \sum\limits_{a \in \mathbb{F}_{q}^*,  \Tr_1^n(a)=1} \chi^k(a)\) denotes the Eisenstein sum on \(\mathbb{F}_q\) corresponding to \(\chi^k\). Consequently,
\[
 \sum_{\alpha \in \mathbb{F}_q} e(\psi^j\alpha^3) =-p \overline{\chi}(\psi^j) E_n(\chi) -p\chi(\psi^j)  E_n(\chi^2).
\]
Since \(\chi\) has order 3, \(\chi^2 = \overline{\chi}\) and \(E_n(\chi^2) = \overline{E_n(\chi)}\). Therefore,
\[
 \sum_{\alpha \in \mathbb{F}_q} e(\psi^j\alpha^3) = -2p\cdot \mathrm{Re} \left( \overline{\chi}(\psi^j) E_n(\chi) \right).
\]
By  \cite[Theorem 12.3.1]{BEW}, for $\chi(\psi)=\omega=e^{2\pi i/3}$, \(p \equiv 1 \pmod{3}\) and \(n \equiv 0 \pmod{3}\), the Eisenstein sum \(E_n(\chi)\) is given by
\[
E_n(\chi) = (-1)^{n} p^{n/3-1} \left( \frac{1}{2} (u + i v \sqrt{3}) \right)^{n/3},
\]
where the integers \(u, v\) satisfy 
\[
  u^2 + 3v^2=4p,  u \equiv 1\,\,({\rm mod}\,\,3), v\equiv 0\,\,({\rm mod}\,\,3) \mbox{ and } 3v\equiv u(2\psi^{(p^n-1)/3}+1)\,\,({\rm mod}\,\,p).
\] 
Then, we obtain
\begin{align*}
\sum_{\alpha \in \mathbb{F}_q} e(\psi^j\alpha^3)   
&=  -2p \cdot \mathrm{Re}  \left( \overline{\chi}(\psi^j) E_n(\chi) \right) \\
&= (-1)^{n+1} 2p^{n/3}\cdot \mathrm{Re}  \left( \omega^{-j}    \left( \frac{1}{2} (u + i v \sqrt{3}) \right)^{n/3} \right) .
\end{align*}

{\textbf{Case 2:}}  \(n \not\equiv 0 \pmod{3}\).
For \(k = 1, 2\), applying \cite[Theorem 12.1.1]{BEW}, the restricted character \((\chi^k)^* = (\chi^*)^k\) is nontrivial, and we have
\[
G_n(\chi^k) = G((\chi^*)^k) E_n(\chi^k).
\]
Consequently,
\[
 \sum_{\alpha \in \mathbb{F}_q} e(\psi^j\alpha^3) = \overline{\chi}(\psi^j)G(\chi^*) E_n(\chi) +\chi(\psi^j) G(\chi^{*2}) E_n(\chi^2)= 2 \cdot \mathrm{Re}  \left( \overline{\chi}(\psi^j) G(\chi^*) E_n(\chi) \right). 
\]
We now evaluate the two factors \(G(\chi^*)\) and \(E_n(\chi)\). First, by  \cite[Theorem 4.1.6]{BEW}, the cubic Gauss sum \(G(\chi^*)\) on \(\mathbb{F}_p\) is given by

\begin{equation}\label{gausseisen}
G(\chi^*) = \frac{1}{2} p^{1/3} \left(r_3 + i s_3 \sqrt{3}\right) H(\chi^*), 
\end{equation}
where the integers \(u, v\) satisfy 
\[
  u^2 + 3v^2=4p,  u \equiv 1\,\,({\rm mod}\,\,3), v\equiv 0\,\,({\rm mod}\,\,3) \mbox{ and } 3v\equiv u(2\psi^{(p^n-1)/3}+1)\,\,({\rm mod}\,\,p),
\] 
and \(H(\chi^*)\) is the product constructed by Cassels in \cite{Cassels} (see also  \cite[(4.1.4)]{BEW}), which satisfies
\[
H^3(\chi^*) = J^{-2}(\chi^*, \chi^*),
\]
where $J^{-2}(\chi^*, \chi^*)= \left( \frac{1}{2}(u + i v\sqrt{3}) \right)^{-2}$ is a Jacobi sum.

\noindent\textbf{Subcase 1: } \(n \equiv 1 \pmod{3}\).
 By  \cite[Theorem 12.3.1]{BEW}, for $\chi(\psi)=\omega=e^{2\pi i/3}$, \(p \equiv 1 \pmod{3}\) and \(n \equiv 1 \pmod{3}\), the Eisenstein sum \(E_n(\chi)\) is given by
\begin{equation}\label{Enchi1}
E_n(\chi) = (-1)^{n-1} p^{(n-1)/3} \left( \frac{1}{2} (u + i v \sqrt{3}) \right)^{(n-1)/3}.
\end{equation}
Substituting (\ref{gausseisen}) and (\ref{Enchi1}) into the product \(G(\chi^*) E_n(\chi)\), we obtain
\begin{align*}
\sum_{\alpha \in \mathbb{F}_q} e(\psi^j\alpha^3)   
&=  2 \cdot \mathrm{Re}  \left( \overline{\chi}(\psi^j) G(\chi^*) E_n(\chi) \right) \\
&= 2 \cdot \mathrm{Re}  \left(\overline{\chi}(\psi^j) \left( p^{1/3} \frac{1}{2}(u + i v\sqrt{3})H(\chi^*) \right) \left( (-1)^{n-1} p^{(n-1)/3} \left( \frac{1}{2}(u + i v\sqrt{3}) \right)^{(n-1)/3} \right) \right) \\
&=2(-1)^{n+1}p^{n/3} \cdot \mathrm{Re} \left( \overline{\chi}(\psi^j) \omega^{j_1}J^{-2/3}(\chi^*, \chi^*) \left( \frac{1}{2}(u + i v\sqrt{3}) \right)^{(n+2)/3} \right)\\
&=2(-1)^{n+1}p^{n/3} \cdot \mathrm{Re}  \left( \omega^{-j}  \omega^{j_1}  \left( \frac{1}{2}(u + i v\sqrt{3}) \right)^{-2/3} \left( \frac{1}{2}(u + i v\sqrt{3}) \right)^{(n+2)/3} \right)\\
&=2(-1)^{n+1}p^{n/3} \cdot \mathrm{Re} \left( \omega^{-j} \omega^{j_1}   \left( \frac{1}{2}(u + i v\sqrt{3}) \right)^{n/3} \right)
\end{align*}
holds for some $j_1 \in \{0,1,2\}$.

\noindent\textbf{Subcase 2: } \(n \equiv 2 \pmod{3}\). By \cite[Theorem 12.3.1]{BEW},  for $\chi(\psi)=\omega=e^{2\pi i/3}$, \(p \equiv 1 \pmod{3}\) and \(n \equiv 2 \pmod{3}\), the Eisenstein sum \(E_n(\chi)\) is given by
\begin{equation}\label{Enchi2}
E_n(\chi) = (-1)^{n-1} p^{(n-2)/3} \left( \frac{1}{2}(u + iv\sqrt{3}) \right)^{(n+1)/3}.
\end{equation}
Substituting (\ref{gausseisen}) and (\ref{Enchi2}) into the product \(G(\chi^*) E_n(\chi)\), we obtain
\begin{align*}
\sum_{\alpha \in \mathbb{F}_q} e(\psi^j\alpha^3)   
&=  2 \cdot \mathrm{Re}  \left( \overline{\chi}(\psi^j) G(\chi^*) E_r(\chi) \right) \\
&= 2 \cdot \mathrm{Re} \left(\overline{\chi}(\psi^j) \left( p^{1/3} \frac{1}{2}(u + i v\sqrt{3})H(\chi^*) \right) \left( (-1)^{n-1} p^{(n-2)/3} \left( \frac{1}{2}(u + i v\sqrt{3}) \right)^{(n+1)/3} \right) \right) \\
&= 2(-1)^{n+1}p^{(n-1)/3} \cdot \mathrm{Re}  \left( \overline{\chi}(\psi^j) \omega^{j_1} J^{-2/3}(\chi^*, \chi^*) \left( \frac{1}{2}(u + i v\sqrt{3}) \right)^{(n+4)/3} \right)\\
&= 2(-1)^{n+1}p^{(n-1)/3} \cdot \mathrm{Re}  \left( \omega^{-j} \omega^{j_1} \left( \frac{1}{2}(u + i v\sqrt{3}) \right)^{-2/3}  \left( \frac{1}{2}(u + i v\sqrt{3}) \right)^{(n+4)/3} \right)\\
&=2(-1)^{n+1}p^{n/3} \cdot \mathrm{Re}  \left( \omega^{-j} \omega^{j_1}   \left( \frac{1}{2}(u + i v\sqrt{3}) \right)^{n/3} \right).
\end{align*}
holds for some $j_1 \in \{0,1,2\}$.

Based on the discussion of two cases above, when $p \equiv 1 \pmod{3}$, we always have
\begin{align*}
\sum_{\alpha \in \mathbb{F}_q} e(\psi^j\alpha^3)   = (-1)^{n+1} 2p^{n/3} \operatorname{Re} \left(\omega^{-j}  \omega^{j_1} \left( \frac{1}{2} (u + i v \sqrt{3}) \right)^{n/3} \right),
\end{align*}
which holds for some $j_1 \in \{0,1,2\}$.
Set
\[
Z = \frac{1}{2}\left(u + i v\sqrt{3}\right),
\]
which has modulus $|Z| = \sqrt{p}$ and argument $\theta = \operatorname{sgn}(v) \cdot \arccos\left(\frac{u}{2\sqrt{p}}\right)$.
However, when $n \not\equiv 0 \pmod{3}$, the expression $Z^{n/3}$ is multi-valued and has three distinct possibilities, namely,
\[
Z^{n/3} \in \left\{ p^{n/6} e^{i n\theta/3},\; e^{2\pi i/3} p^{n/6} e^{i n\theta/3},\; e^{4\pi i/3} p^{n/6} e^{i n\theta/3} \right\}.
\]
Consequently, for some $l \in \{0,1,2\}$, we have
\begin{align*}
\sum_{\alpha \in \mathbb{F}_q} e(\psi^j\alpha^3)   
&= 2(-1)^{n+1} p^{n/3} \cdot \operatorname{Re} \left( \omega^{-j} \omega^{j_1}  \left( \frac{1}{2}(u + i v\sqrt{3}) \right)^{n/3} \right)\\
&= 2(-1)^{n+1} p^{n/3}\cdot  \operatorname{Re} \left( \omega^{-l} p^{n/6} e^{i n\theta/3} \right)\\
&= 2(-1)^{n+1} p^{n/2} \cos\left( \frac{n\theta}{3} - \frac{2\pi l}{3} \right),
\end{align*}
and \begin{align*}
\sum_{\alpha \in \mathbb{F}_q} e(\psi^{j+1}\alpha^3)   
= 2(-1)^{n+1} p^{n/2} \cos\left( \frac{n\theta}{3} - \frac{2\pi (l+1)}{3} \right).
\end{align*}
\end{proof}

\begin{proof}[Proof of Theorem~\ref{13distribution}]
We just calculate the conclusion for $\frac{1}{3}p^{-i}({p^{n}-1})\equiv 1\,\,({\rm mod}\,\,3)$. When $(u,v)$ runs through ${\mathbb{F}_{p^n}} \times {\mathbb{F}_{p^n}}$, the possible values of $W_d(u,v)$ are confirmed in Lemma \ref{wf21value}. It follows from (\ref{oncetime}) that $2\eta_{\psi^s}+\frac{p^n-1}{3}$ occurs $p^n-1$ times for $s=0,1,2$.
Define
\begin{equation}\label{Ni}
  G_{j,k,l} = \left| \left\{(u,v) \in{\mathbb{F}_{p^n}} \times {\mathbb{F}_{p^n}} : W(u,v)=\eta_{\psi^j}+\eta_{\psi^k}+\eta_{\psi^l}+1 \right\}\right|, \mbox{ for } j,k,l\in \{0,1,2\}.
\end{equation}
By Lemma \ref{wf21value}, we have 
\begin{equation*}
  \begin{aligned}
  3W_{d}(u,v)=\sum\limits_{y \in \mathbb{F}_{p^n}}\zeta_p^{\Tr_1^{n}\left(y^3(u-v^{p^{-i}})\right)}
  +\sum\limits_{y \in \mathbb{F}_{p^n}}\zeta_p^{\Tr_1^{n}\left(y^3(u-v^{p^{-i}}{\beta})\psi\right)}
  +\sum\limits_{y \in \mathbb{F}_{p^n}}\zeta_p^{\Tr_1^{n}\left(y^3(u-v^{p^{-i}}{\beta}^2){\psi}^2\right)},\\
\end{aligned}
\end{equation*}
where $\beta={\psi}^{p^{-i}(p^n-1)/3}$.
Denote
\begin{equation*}
  \mathcal{G}:=\left\{(u,v)\in\mathbb{F}_{p^n}\times\mathbb{F}_{p^n} \mid (u-v^{p^{-i}}{\beta}^j)\neq0, \mbox{ for } j=0,1,2\right\}.
\end{equation*}
For each $(u,v)\in\mathcal{G}$, we define the function $f_{u,v}:\{0,1,2\}\rightarrow\{0,1,2\}$ based on the rule $f_{u,v}(j)=r$ with $r \equiv \operatorname{ind}_{\psi}\big(\psi^j(u-v^{p^{-i}}{\beta}^j)\big)\,\,({\rm mod}\,\,3)$. We then partition the set $\mathcal{G}$ into the following disjoint subsets:
\begin{equation*}
  S_{j,k,l}:=\big\{(u,v)\in\mathcal{G} \mid \big(f_{u,v}(0),f_{u,v}(1),f_{u,v}(2)\big)=(j,k,l) \big\}, \mbox{ for } j,k,l\in\{0,1,2\}.
\end{equation*}
For any \(u, v \in \mathbb{F}_{p^n}\) and \(0 \leq j \leq 2\), define \(u_j := \psi^j(u - v^{p^{-i}}\beta^j)\). Then
\begin{equation}\label{u0u1u2}
u_0 + \psi^{-1}\beta u_1 + \psi^{-2}\beta^2 u_2 = 0.
\end{equation}
Moreover, if we arbitrarily choose values for \(u_1\) and \(u_2\), there exists a unique \((u, v) \in \mathbb{F}_{p^n}^2\) such that
\[
u_1 = \psi(u - v^{p^{-i}}\beta), \quad
u_2 = \psi^2(u - v^{p^{-i}}\beta^2),
\]
and by (\ref{u0u1u2}) we further obtain
\[
u_0 = -\frac{u_1}{\psi\beta^2}\left(\frac{u_2}{\psi\beta^2 u_1} + 1\right).
\]
Hence, to compute (for example) \(|S_{0,0,0}|\), we only need to count how many pairs \((u_1, u_2) \in \mathbb{F}_{p^n}^2\) satisfy \(\operatorname{ind}_{\psi}(u_j) \equiv 0 \pmod{3}\) for all \(0 \leq j \leq 2\). We may assume that \(u_2\) takes any value in \(\mathcal{D}_0\), which gives \(\frac{p^n-1}{3}\) possible choices for \(u_2\). Now fix a \(u_2 \in \mathcal{D}_0\). Note that \(\psi\beta^2 \in \mathcal{D}_0\), \(-1 \in \mathcal{D}_0\), and \(u_0 = -\frac{u_1}{\psi\beta^2}\bigl(\frac{u_2}{\psi\beta^2 u_1} + 1\bigr)\). Therefore, for \(u_0\) and \(u_1\) to also belong to \(\mathcal{D}_0\), it is necessary that
\[
\frac{u_2}{\psi\beta^2 u_1} + 1 \in \mathcal{D}_0,
\]
and by Lemma \ref{crslemma}, the number of such choices for \(u_1\) is
\[
|C_{0,0}| = \frac{1}{9}\left(p^n + \frac{{(-1)}^{n-1}}{2^{n-1}}E(u,v,n) - 8\right).
\]
Consequently,
\begin{equation*}
G_{0,0,0} = |S_{0,0,0}| = \frac{p^n-1}{3}|C_{0,0}| = \frac{1}{27}(p^n-1)\bigg(p^n + \frac{{(-1)}^{n-1}}{2^{n-1}}E(u,v,n) - 8\bigg).
\end{equation*}
Similarly, we obtain
\begin{equation*}
  G_{1,1,1}=|S_{1,1,1}|=G_{2,2,2}=|S_{2,2,2}|
  =\frac{1}{27}(p^n-1)\bigg(p^n+\frac{{(-1)}^{n-1}}{2^{n-1}}E(u,v,n)-8\bigg),
\end{equation*} 
\begin{align*}
 G_{0,1,2}&=|S_{0,1,2}|+|S_{0,2,1}|+|S_{1,0,2}|+|S_{1,2,0}|+|S_{2,0,1}|+|S_{2,1,0}|\\
 &= \frac{p^n-1}{3}\left(|C_{2,1}|+|C_{1,2}|+|C_{1,2}|+|C_{2,1}|+|C_{2,1}|+|C_{1,2}|\right) \\
&= \frac{2}{9}(p^n-1)\left(p^n+\frac{{(-1)}^{n-1}}{2^{n-1}}E(u,v,n)+1\right),
\end{align*}
\begin{align*}
 G_{0,0,1}&=G_{1,1,2}=G_{2,2,0}= \frac{p^n-1}{3}\big( |C_{0,1}|+|C_{2,2}|+|C_{1,0}| \big) \\
&= \frac{1}{9}(p^n-1)\left(p^n+\frac{{(-1)}^{n}}{2^{n}}\big(E(u,v,n)-O(u,v,n)\big)-2\right),
\end{align*}
\begin{align*}
 G_{0,0,2}&=G_{1,1,0}=G_{2,2,1}= \frac{p^n-1}{3}\big( |C_{0,2}| +|C_{2,0}|+ |C_{1,1}| \big)\\ &=\frac{1}{9}(p^n-1)\left(p^n+\frac{{(-1)}^{n}}{2^{n}}\big(E(u,v,n)+O(u,v,n)\big)-2\right).
\end{align*}

Since $\gcd(d,p^n-1)=1$, we have $W_{d}(0,v)=0$ for any $v\in \mathbb{F}_{p^n}^*$. Note that for each fixed $v\in \mathbb{F}_{p^n}^*$, the value distribution of $W_{d}(u/v^{d^{-1}},1)$ as $u$ runs through $\mathbb{F}_{p^n}^*$ is the same as that of $W_{d}(u,1)$ when $u$ runs through $\mathbb{F}_{p^n}^*$, where $d^{-1}$ denotes the inverse of $d$ modulo $p^n-1$. Moreover, for each fixed $v \in \mathbb{F}_{p^n}^*$, $W_{d}(u,v)=W_{d}(u/v^{d^{-1}},1)$. It is also obvious that $W_{d}(u,0)=0$ for any $u\in \mathbb{F}_{p^n}^*$. Therefore, the value distribution of $W_{d}(u,1)=C_{\tau}(s,s')+1$ as $u$ runs through $\mathbb{F}_{p^n}^*$ can be obtained.
\end{proof}

In what follows, we provide two examples to verify our results.
\begin{example} \rm
Let $p=7$, $n=5$, $d=\frac{7^{5}-1}{3}+1$. One can check that 3 is a primitive element of $\mathbb{F}_7$.  Let $\psi$ be a primitive element of $\mathbb{F}_{7^5}$ such that $\psi^{\frac{7^5-1}{7-1}}=3$.  Then $u=1$, $v=-3$, $E(1,-3,5)=3376$, $O(1,-3,5)=-4176$, $\theta =  - \arccos\frac{\sqrt{7}}{14}$. The cross correlation distribution $C_{\tau}(s,s')$ when $\tau$ runs from $0$ to $p^n-2$ can be described as follows.
\begin{table}[H]
\centering
\begin{tabular}{lll}
\hline
Value & Frequency \\
[0.5ex]
\hline
-1 & 3780 \\
$98\sqrt{7} \cos\left( \frac{5\theta}{3} - \frac{2\pi h}{3} \right) - 1$
& 630  for $h=0,1,2$   \\ 
$\frac{98\sqrt{21}}{3} \cos\left( \frac{5\theta}{3} -\frac{(4h+1)\pi}{6} \right) - 1$
 & 1841 for $h=0,1,2$ \\ 
$ \frac{98\sqrt{21}}{3} \cos\left( \frac{5\theta}{3} - \frac{(4h-1)\pi}{6} \right) - 1$ &  1870  for $h=0,1,2$ \\
$\frac{16804}{3} + \frac{196\sqrt{7}}{3} \cos\left( \frac{5\theta}{3} - \frac{2\pi h}{3} \right)$ &  1  for $h=0,1,2$ \\
\hline
\end{tabular}
\end{table}
\end{example}

\begin{example}\rm
 Let $p=13$, $n=3$, $d=\frac{13^{3}-1}{3}+1$. One can check that 2 is a primitive element of $\mathbb{F}_{13}$.  Let $\psi$ be a primitive element of $\mathbb{F}_{13^3}$ such that $\psi^{\frac{13^3-1}{13-1}}=2$.  Then $u=-5$, $v=-3$, $E(-5,-3,3)=280$, $O(-5,-3,3)=-432$, $\theta =  -\arccos\frac{-5\sqrt{13}}{26}$. The cross correlation distribution $C_{\tau}(s,s')$ when $\tau$ runs from $0$ to $p^n-2$ can be described as follows.
\begin{table}[H]
\centering
\begin{tabular}{l l l}
\hline
Value & Frequency (each) \\
[0.5ex]
\hline
-1 & 501 \\
-27, 90, -66 & 84  \\ 
38, -53, 12 &  234 \\
51, -40, -14 & 246   \\ 
701, 753, 740 &  1 \\
\hline
\end{tabular}
\end{table}
\end{example}

As an application, we consider a class of cyclic codes. Let $h_1(x)$ and $h_d(x)$ be the minimal polynomials of $\psi^{-1}$ and $\psi^{-d}$ over $\mathbb{F}_{p}$, respectively. Let $\mathcal{C}_{1,d}$ be the cyclic code with parity-check polynomial $h_1(x)h_d(x)$. By Delsarte's Theorem \cite{Delsarte_theorem}, the cyclic code $\mathcal{C}_{1,d}$ can be expressed as
\begin{equation*}
    \mathcal{C}_{1,d}=\left\{c_{u,v}=\left(\Tr ^{n}_1(u\psi^{i}+v\psi^{id})\right)_{i=0}^{p^n-2}\mid u,v\in \mathbb{F}_{p^{n}} \right\}.
\end{equation*}

\begin{theorem}\label{wf2 code}
 The weight distribution of $\mathcal{C}_{1,d}$ is given in Table \ref{codetable1} when \(n \equiv 0 \pmod{3}\), and in Table \ref{codetable2} when \(n \not\equiv 0 \pmod{3}\).
\begin{table}[H]\footnotesize
\centering
 \caption{  \(n \equiv 0 \pmod{3}\) }
\label{codetable1}
\begin{tabular}{l l }
\hline
      Weight              &   Number of codewords   \\
[0.5ex]
\hline
0 &1\\
$p^{n-1}(p-1)$ & $\frac{p^n-1}{9}\big(2p^n+2\frac{{(-1)}^{n-1}}{2^{n-1}}E(u,v,n)-7\big)$ \\
$p^{n-1}(p-1)-\frac{(p-1)(-1)^{n+1}2p^{n/2}}{p} \cos\left( \frac{n\theta}{3} - \frac{2\pi l}{3} \right)$
& $\frac{p^n-1}{27}\big(p^n+\frac{{(-1)}^{n-1}}{2^{n-1}}E(u,v,n)+1\big)$ for any $l=0,1,2$  \\ 
$p^{n-1}(p-1)-\frac{(p-1)2\sqrt{3}(-1)^{n+1}p^{n/2}}{3p} \cos\left( \frac{n\theta}{3} -\frac{\pi(4l+1)}{6} \right)$
 & $\frac{p^n-1}{9}\big(p^n+\frac{{(-1)}^{n}}{2^{n}}\big(E(u,v,n)-O(u,v,n)\big)-2\big)$ for any $l=0,1,2$  \\ 
$p^{n-1}(p-1)-\frac{(p-1)2\sqrt{3}(-1)^{n+1}p^{n/2}}{3p} \cos\left( \frac{n\theta}{3} -\frac{\pi(4l-1)}{6} \right) $ 
 &   $\frac{p^n-1}{9}\big(p^n+\frac{{(-1)}^{n}}{2^{n}}\big(E(u,v,n)+O(u,v,n)\big)-2\big)$ for any $l=0,1,2$  \\
$(p-1)\left(p^{n-1}-\frac{2(-1)^{n+1}p^{n/2}}{3p} \cos\left( \frac{n\theta}{3} - \frac{\pi(2l+1)}{3} \right) +\frac{p^n}{3}\right)$ &  $p^n-1$ for any $l=0,1,2$\\
\hline
\end{tabular}
\end{table}

\begin{table}[H]\footnotesize
\centering
 \caption{  \(n \not\equiv 0 \pmod{3}\) }
\label{codetable2}
\begin{tabular}{l l }
\hline
      Weight              &   Number of codewords   \\
[0.5ex]
\hline
0 &1\\
$p^{n-1}(p-1)$ & $(p^{n}-2)(p^n-1)$ \\
$\frac{2}{3}(p-1)p^{n-1}$ &  $3(p^n-1)$\\
\hline
\end{tabular}
\end{table}
  \end{theorem}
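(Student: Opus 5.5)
The weight of a codeword will be read off from the exponential sum $W_d(u,-v)$, and then the value distribution of $W_d$ over all of $\mathbb{F}_{p^n}\times\mathbb{F}_{p^n}$ will be taken from the proof of Theorem~\ref{13distribution}.

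For $c_{u,v}\in\mathcal{C}_{1,d}$, standard character orthogonality gives
\[
\mathrm{wt}(c_{u,v}) = p^n-1-\frac{1}{p}\sum_{a\in\mathbb{F}_p}\sum_{x\in\mathbb{F}_{p^n}^*}\zeta_p^{\Tr_1^n(a(ux+vx^d))}=(p-1)p^{n-1}-\frac{p-1}{p}\,W_d(u,-v).
\]
This uses two facts. First, $\gcd(d,p^n-1)=1$ by Theorem~\ref{13distribution}, so $x\mapsto x^d$ permutes $\mathbb{F}_{p^n}$. Second, $d\equiv p^i \pmod{p-1}$ because $p-1$ divides $(p^n-1)/3$ (as $3\mid p-1$ and $p-1$ divides $(p^n-1)/(p-1)\cdot(p-1)/3$, or more simply because the residue of $d$ is $1+0$ once we note $(p-1)\mid \frac{p^n-1}{3}$ when $3\mid \frac{p^n-1}{p-1}$, which needs care). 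Given these, replacing $(u,v)$ by $(au,a^{d}v)$ for $a\in\mathbb{F}_p^*$ does not change the inner sum, so each term equals $W_d(u,-v)$ up to relabeling.

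I expect the subtle point to be exactly this scaling invariance. It holds when $a^d=a$ for $a\in\mathbb{F}_p^*$, i.e. when $(p-1)\mid \frac{p^n-1}{3}$, which is equivalent to $3\mid n$. This is where the split into $n\equiv 0$ and $n\not\equiv 0 \pmod 3$ must come from.

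\textbf{Case $n\equiv 0\pmod 3$.} Here we have $g=\frac13 p^{-i}(p^n-1)\equiv 0 \pmod 3$. Indeed, $\frac{p^n-1}{3}=\frac{p-1}{3}(1+p+\dots+p^{n-1})$, and the second factor is $\equiv n\equiv 0 \pmod 3$, while $p^{-i}\equiv1 \pmod 3$. So Table~\ref{13 table2} applies. I will take its counts over $u\in\mathbb{F}_{p^n}^*$ with $v=1$ and multiply by $p^n-1$, using the scaling $W_d(u,v)=W_d(u/v^{d^{-1}},1)$ from the proof of Theorem~\ref{13distribution}. The pairs with $u=0$ or $v=0$ (excluding $(0,0)$) give $W_d=0$, i.e. weight $p^{n-1}(p-1)$; there are $2(p^n-1)$ of them. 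Adding these to $(p^n-1)\cdot\frac19(2p^n+\dots-25)$ gives $\frac{p^n-1}{9}(2p^n+\dots-7)$, which matches Table~\ref{codetable1}. The remaining rows follow by substituting $W_d=C_\tau+1$ and the values from Lemma~\ref{gauss3} into the weight formula. In the last row, $\frac{p^n+2}{3}+\eta+\eta'$ together with the constant produces the stated expression.

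\textbf{Case $n\not\equiv 0\pmod 3$.} Here the sum over $a$ does not collapse, so I will compute the full expression $\sum_{a\in\mathbb{F}_p^*}W_d(au,-a^dv)$ directly via the decomposition (\ref{1mod3 s}). Since $a\mapsto a^d$ now multiplies by a nontrivial cube root of unity on $\mathbb{F}_p^*$, the three cubic-class sums get cycled as $a$ ranges over cosets of $(\mathbb{F}_p^*)^3$. The total then involves $\eta_1+\eta_\psi+\eta_{\psi^2}=-1$, which forces the weight $(p-1)p^{n-1}$ on the generic locus $\mathcal{G}$. On the exceptional locus, where exactly one of the three terms is degenerate (the $3(p^n-1)$ pairs counted in (\ref{oncetime}) and its analogue), the degenerate term contributes $p^n$ in place of a Gaussian-period sum, and averaging gives weight $\frac23(p-1)p^{n-1}$. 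The main obstacle is verifying this averaging carefully: I need to track how $\operatorname{ind}_\psi(a)$ for $a\in\mathbb{F}_p^*$ shifts the three cubic classes when $3\nmid n$. I would check this by choosing a generator of $\mathbb{F}_p^*$ of the form $\psi^{(p^n-1)/(p-1)}$ and noting that its index is $\not\equiv 0\pmod 3$.
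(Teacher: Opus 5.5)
Your plan follows the paper's proof. For $3\mid n$, the sum over $\mathbb{F}_p^*$ collapses to $(p-1)W_d(u,-v)$, and the distribution is Table~\ref{13 table2} scaled by $p^n-1$, plus the $2(p^n-1)$ pairs with $uv=0$. For $3\nmid n$, the decomposition (\ref{1mod3 s}) together with $\eta_1+\eta_\psi+\eta_{\psi^2}=-1$ kills everything except the degenerate term. Your device of shifting the cubic class of each term in (\ref{1mod3 s}) by $\operatorname{ind}_\psi(a)$ is only a different bookkeeping of the paper's regrouping of $\sum_j W_d(u,-\beta^jv)$. Three points need fixing.

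\textbf{The sum for $3\nmid n$.} The expression you say you will compute, $\sum_{a\in\mathbb{F}_p^*}W_d(au,-a^dv)$, is identically $(p-1)W_d(u,-v)$ (substitute $x\mapsto a^{-1}x$). Taken literally it does collapse, and it would give the wrong answer. The sum that actually appears in the weight is $\sum_a W_d(au,-av)=\sum_a W_d(u,-a^{1-d}v)$. Here $a^{1-d}=a^{-(p^n-1)/3}$ runs through the three cube roots of unity, each $\frac{p-1}{3}$ times, so the sum equals $\frac{p-1}{3}\sum_{j=0}^{2}W_d(u,-\beta^jv)$. Then (\ref{1mod3 s}) gives $\sum_j W_d(u,-\beta^jv)=p^n\cdot\#\{j:\ u=-\beta^jv^{p^{-i}}\}$, and Table~\ref{codetable2} follows.

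\textbf{The congruence in your first paragraph.} The claim $d\equiv p^i\pmod{p-1}$ holds only when $3\mid n$. So your first displayed weight formula must be restricted to that case, as you later realize.

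\textbf{The last row of Table~\ref{codetable1}.} Substituting $W_d=\frac{2(-1)^{n+1}p^{n/2}}{3}\cos\bigl(\frac{n\theta}{3}-\frac{\pi(2l+1)}{3}\bigr)+\frac{p^n}{3}$ into $(p-1)p^{n-1}-\frac{p-1}{p}W_d$ gives the weight $(p-1)\bigl(\frac{2}{3}p^{n-1}-\frac{2(-1)^{n+1}p^{n/2}}{3p}\cos\bigl(\frac{n\theta}{3}-\frac{\pi(2l+1)}{3}\bigr)\bigr)$. This is not the printed entry. The printed one, containing $+\frac{p^n}{3}$, exceeds the code length $p^n-1$ and is a misprint. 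For $p=13$, $n=3$ it gives weights near $10800$, whereas the correct values are $1380, 1332, 1344$. Your assertion that the substitution produces the stated expression is therefore false. The paper's proof, which only says the distribution follows from Theorem~\ref{13distribution}, does not justify the printed row either.
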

\begin{proof}
{\textbf{Case 1:}}  \(n \equiv 0 \pmod{3}\). 
For the codeword $c_{u,v}$ of $\mathcal{C}_{1,d}$, we also calculate its Hamming weight by using exponential sum:
\begin{equation*}
    \begin{aligned}
        \omega_H(c_{u,v}) &=p^{n}-1-\# \left\{x\in\mathbb{F}_{p^{n}}^*\mid \Tr_1^{n}\left(ux+vx^{\frac{p^n-1}{3}+p^i}\right)=0 \right\}\\
        &=p^{n}-\frac{1}{p}\sum\limits_{y\in \mathbb{F}_{p}}\sum\limits_{x\in \mathbb{F}_{p^{n}}}\omega_p^{y{\Tr ^{n}_1\left(ux+vx^{\frac{p^n-1}{3}+p^i}\right)}}\\
        &=p^{n-1}(p-1)-\frac{1}{p}\sum\limits_{y\in \mathbb{F}_{p}^*}\sum\limits_{x\in \mathbb{F}_{p^{n}}}\omega_p^{\Tr ^{n}_1\left(uxy+v{(xy)}^{\frac{p^n-1}{3}+p^i}\right)}\\
        &=p^{n-1}(p-1)-\frac{p-1}{p}\sum\limits_{x\in \mathbb{F}_{p^{n}}}\omega_p^{\Tr ^{n}_1\left(ux+vx^{\frac{p^n-1}{3}+p^i}\right)}\\
        &=p^{n-1}(p-1)-\frac{p-1}{p}W_d(u,-v).
    \end{aligned}
\end{equation*}
Therefore, the weight distribution of cyclic code $\mathcal{C}_{1,d}$ is directly determined by the value distribution of the exponential sum $W_d(u,-v)$. If $u=0$, we have
\begin{equation*}
    \omega_H(c_{0,v})=
	\left\{ \begin{array}{llll}
		0&\mbox{if}&v=0,\\
		p^{n-1}(p-1)&\mbox{if}&v\neq0.
	\end{array}\right.
\end{equation*}
If $u\neq0$, the value distribution of $\omega_H(c_{u,v})$ follows from Theorem \ref{13distribution}.

{\textbf{Case 2:}}  \(n \not \equiv 0 \pmod{3}\). Let $p=3h+1$. For the codeword $c_{u,v}$ of $\mathcal{C}$, we also calculate its Hamming weight by using exponential sum:
\begin{equation*}
    \begin{aligned}
        \omega_H(c_{u,v}) &=p^{n}-1-\# \left\{x\in\mathbb{F}_{p^{n}}^*\mid \Tr_1^{n}\left(ux+vx^{\frac{p^n-1}{3}+p^i}\right)=0 \right\}\\
        &=p^{n}-\frac{1}{p}\sum\limits_{y\in \mathbb{F}_{p}}\sum\limits_{x\in \mathbb{F}_{p^{n}}}\omega_p^{y{\Tr ^{n}_1\left(ux+vx^{\frac{p^n-1}{3}+p^i}\right)}}\\
        &=p^{n-1}(p-1)-\frac{h}{p}\sum\limits_{j=0}^2\sum\limits_{x\in \mathbb{F}_{p^{n}}}\omega_p^{\Tr^{n}_1\left(ux+\beta^{j}v{x}^{\frac{p^n-1}{3}+p^i}\right)}\\
        &=p^{n-1}(p-1)-\frac{h}{p}\sum\limits_{j=0}^2W_d(u,-\beta^jv),
    \end{aligned}
\end{equation*}
where $\beta={\psi}^{p^{-i}(p^n-1)/3}$.
By Lemma \ref{wf21value}, we have 
\begin{equation*}
  \begin{aligned}
  &\sum\limits_{j=0}^2W_d(u,-\beta^jv) \\
  &=\sum\limits_{j=0}^2 \frac{1}{3}
  \left( \sum\limits_{y \in\mathbb{F}_{p^n}}
  \zeta_p^{\Tr_1^{n}\left(y^3\left(u+(\beta^jv)^{p^{-i}}\right)\right)}
  +\zeta_p^{\Tr_1^{n}\left(y^3\left(u+(\beta^jv)^{p^{-i}}{\beta_1}\right)\psi\right)}
  +\zeta_p^{\Tr_1^{n}\left(y^3\left(u+(\beta^jv)^{p^{-i}}{\beta_1}^2\right){\psi}^2\right)}
  \right)\\
  &=\sum\limits_{j=0}^2 \frac{1}{3}
  \left( \sum\limits_{y \in \mathbb{F}_{p^n}}
  \zeta_p^{\Tr_1^{n}\left(y^3\left(u+v^{p^{-i}}\right)\psi^j\right)}
  +\zeta_p^{\Tr_1^{n}\left(y^3\left(u+\beta v^{p^{-i}}\right)\psi^j\right)}
  +\zeta_p^{\Tr_1^{n}\left(y^3\left(u+\beta^2 v^{p^{-i}}\right){\psi}^j\right)}
  \right).
\end{aligned}
\end{equation*}
 Denote
\begin{equation*}
  \mathcal{G}:=\left\{(u,v)\in\mathbb{F}_{p^n}\times\mathbb{F}_{p^n} \mid (u+v^{p^{-i}}{\beta_1}^j)\neq0, \mbox{ for } j=0,1,2\right\}.
\end{equation*}
For each $(u,v)\in\mathcal{G}$, $\sum\limits_{j=0}^2W_d(u,-\beta^jv)=0$; for each $(u,v)\not\in\mathcal{G}$,  $(u,v)\in\mathcal{G}$, $\sum\limits_{j=0}^2W_d(u,-\beta^jv)=p^n$.
\end{proof}

\end{document}